\documentclass[journal]{IEEEtran}

\usepackage{amssymb,amsthm}
\usepackage[centertags]{amsmath}
\usepackage{graphicx}
\usepackage{booktabs,multirow}
\usepackage[numbers,sort&compress]{natbib}
\usepackage{url}
\usepackage{xcolor}
\usepackage{float}

\usepackage[ruled,vlined,linesnumbered]{algorithm2e}
\SetAlgoCaptionSeparator{:}
\SetAlgoNlRelativeSize{-1}
\SetAlFnt{\small}
\SetAlCapFnt{\small}
\SetAlCapNameFnt{\small\bfseries}
\SetKwSty{textbf}
\SetKw{Return}{return}
\SetKwFunction{Segment}{Segment}
\SetKwProg{Fn}{Procedure}{:}{end}

\usepackage[colorlinks=true,allcolors=blue]{hyperref}

\theoremstyle{plain}

\newtheorem{theorem}{Theorem}
\newtheorem{proposition}{Proposition}
\newtheorem{lemma}{Lemma}

\theoremstyle{definition}

\newtheorem{definition}{Definition}
\newtheorem{assumption}{Assumption}

\begin{document}

\title{Spectral Dynamics of DeepWalk Embeddings for Dynamic Network Change-Point Detection}

\author{Houlin Zhou, Yejin Wang, Xufei Tang, and Dan Zhuang%
\thanks{Houlin Zhou and Yejin Wang contributed equally to this work.
Corresponding author: Dan Zhuang.}%
\thanks{Houlin Zhou is with the Department of Statistics, Faculty of Arts and Sciences,
Beijing Normal University, Zhuhai, Guangdong 519085, China
(e-mail: \mbox{houlinzhou@bnu.edu.cn}).}%
\thanks{Yejin Wang is with the School of Big Data and Statistics,
Anhui University, Hefei 230601, China. (e-mail: \mbox{w326112006@stu.ahu.edu.cn}).}%
\thanks{Xufei Tang is with the School of Mathematics and Statistics,
Hefei Normal University, Hefei 230601, China
(e-mail: \mbox{xufei\_tang@163.com}).}%
\thanks{Dan Zhuang is with the School of Mathematics and Statistics,
Fujian Normal University, Fuzhou 350117, China
(e-mail: \mbox{zhgdan@fjnu.edu.cn}).}%
}

\maketitle

\begin{abstract}\mdseries
Dynamic networks describe evolving relational systems in which abrupt structural changes may signal anomalous events or important transitions. Detecting such changes requires distinguishing genuine structural signals from fluctuations in network observations and learned representations. We propose a DeepWalk-based framework for detecting and localizing structural changes in dynamic networks. For each snapshot, we learn low-dimensional node embeddings, align them to a fixed reference using orthogonal Procrustes transformations, and aggregate them by mean pooling into comparable graph-level vectors. We then construct a multivariate cumulative sum (CUSUM) scan statistic to identify changes in the embedding mean. Under stated regularity conditions, we derive bounds on stochastic fluctuations under the null hypothesis and sufficient conditions for reliable detection and consistent localization under the alternative. The analysis connects detection performance with network sparsity, spectral separation, embedding dimension, and change magnitude, clarifying the balance between structural signals and representation variability.
Simulation studies demonstrate the effectiveness of the proposed method in detecting and localizing structural changes across a range of dynamic network settings. An application to an international trade network further illustrates its practical usefulness in identifying shifts in trade allocation.

\end{abstract}

\begin{IEEEkeywords}\mdseries
Dynamic networks, change-point detection, CUSUM, DeepWalk, orthogonal Procrustes alignment, spectral embedding, stochastic block model.
\end{IEEEkeywords}

\section{Introduction}
\label{sec:introduction}

Networks provide a natural representation for systems composed of interacting entities and have been widely applied in various domains, including social networks, biological systems, finance, communication systems, and information systems
\citep{guyon1995random, newman2011structure, nagurney2012financial, kolaczyk2014statistical, albert2002statistical, kim2018review, han2021dynamic, bianchi2024relational}.
In many real-world applications, the underlying networks are not static but continuously evolve over time.
Driven by external events or internal structural transitions, network connectivity patterns, community organizations, and interaction mechanisms may change dynamically.
Recently, continuous-time dynamic network learning approaches have been developed to jointly characterize the temporal evolution of network topology and interaction processes
\citep{10636764}.
Therefore, identifying when such changes occur is fundamental for understanding the evolution of network systems and discovering anomalous events or scientifically meaningful phenomena.

Change-point detection in dynamic networks is substantially more challenging than that in classical Euclidean-valued sequences.
For conventional vector-valued observations, structural changes can often be characterized by variations in certain statistical quantities, such as means, variances, or regression coefficients.
In contrast, each observation in a dynamic network is itself a high-dimensional and non-Euclidean object with complex dependencies and structural heterogeneity.
Network changes may appear as variations in edge probabilities, community memberships, latent positions, connectivity patterns, or other high-order graph structural characteristics.
Therefore, comparing network observations at different time points requires either explicitly specifying a statistical model for the graph generation mechanism or constructing an informative representation capable of effectively measuring structural changes.

Existing approaches for change-point detection in dynamic networks can be broadly categorized into three classes: model-based methods, similarity-based methods, and embedding-based methods.
Model-based methods characterize structural changes through variations in parameters of probabilistic network models, including stochastic block models, nonhomogeneous Bernoulli networks, and random dot product graphs
\citep{bhattacharjee2020change, 10.1214/20-AOS1953, enikeeva2021change, padilla2022change}.
Similarity-based methods compare network observations at different time points using manually designed or data-driven discrepancy measures
\citep{10.1214/14-AOS1269, chen2019sequential, penaloza2024changepoint, sulem2024graph}.
Embedding-based methods first map each network into a low-dimensional representation and then detect temporal changes from the resulting embedding sequence
\citep{zambon2019changepoint,lin2023network,huang2020laplacian, hewapathirana2020change, liu2024identifying, huang2022learning, xu2022dyng2g}.
Although these approaches have significantly advanced dynamic network
analysis, several important limitations remain. Model-based methods rely
on the validity of assumed graph generation mechanisms, whereas
similarity-based methods are sensitive to the selected discrepancy
measures. General graph-to-vector change-point inference has received
theoretical treatment \citep{zambon2019changepoint}, but the end-to-end
stochastic effects of random-walk sampling, DeepWalk estimation, and
cross-snapshot coordinate alignment on downstream scan statistics remain
insufficiently understood. A more detailed review of these research
directions is provided in Section~\ref{sec:related_work}.

Among these approaches, graph embedding provides an attractive balance between modeling flexibility and statistical tractability.
By representing irregular graph objects in a low-dimensional Euclidean space, graph embedding enables the application of classical statistical tools without requiring a fully specified distributional model for the network sequence.
Existing studies on joint embedding of multiple graphs have demonstrated that extracting comparable low-dimensional graph-level features from a collection of networks can effectively support graph comparison, classification, and statistical inference
\citep{8889404}.
More recently, continuous-time dynamic network representation methods have further utilized neural differential equations to characterize the joint evolution of network topology and node states, demonstrating that learning-based embeddings can extract low-dimensional representations that evolve over time from complex dynamic networks
\citep{11192661}.
In particular, DeepWalk
\citep{perozzi2014deepwalk}
learns node representations through local co-occurrence patterns generated by truncated random walks and has achieved strong empirical performance in graph representation learning.
However, directly applying change-point detection procedures to DeepWalk embeddings raises a fundamental question:
how can one distinguish genuine structural transitions in the underlying network from random fluctuations introduced by network sampling, random-walk generation, and embedding estimation?

This question reveals a critical theoretical gap in existing embedding-based change-point detection methods.
A large scan statistic may indicate a structural transition in the network, but it may also result from intrinsic randomness in network observations or the embedding procedure.
It remains unclear how network size, temporal sample size, embedding dimension, graph sparsity, and the magnitude of structural changes jointly determine the behavior of change-point statistics constructed from graph embeddings.
Without corresponding theoretical analysis, it is difficult to obtain a principled criterion for determining whether an observed change is sufficiently strong to be distinguished from random fluctuations under the null hypothesis.
Consequently, the statistical reliability of embedding-based change-point detection for networks remains insufficiently understood.

The key observation of this paper is that, once the random errors introduced by network observations and embedding procedures can be explicitly controlled, the resulting graph embedding sequence can be regarded as a sequence of structured Euclidean representations.
This perspective establishes a connection between graph representation learning and statistical change-point analysis.
Rather than treating DeepWalk merely as a heuristic feature extraction technique, we further investigate how uncertainty in the learned representations propagates to downstream CUSUM-type scan statistics.
The resulting theoretical analysis separates the deterministic signal induced by structural transitions from the random fluctuations caused by network generation and embedding procedures.

Based on this observation, we propose a DeepWalk-based framework for change-point detection in dynamic networks.
The proposed framework first maps each network snapshot into low-dimensional node representations, aligns independently fitted embeddings to a fixed reference coordinate system, and obtains graph-level vectors through a permutation-invariant aggregation operator.
Then, a multivariate CUSUM scan statistic is constructed on the sequence of graph-level representations to detect and localize potential structural changes.
Unlike approaches that simply regard embeddings as empirical features, we further analyze how errors from network sampling, random walks, embedding estimation, and coordinate alignment propagate into the scan statistic, thereby characterizing the statistical reliability of the proposed method.
The complete methodological framework is presented in Section~\ref{sec:method}, and the theoretical results are provided in Section~\ref{sec:theory}.

The main contributions of this paper are summarized as follows.

\begin{itemize}

\item 
We propose a representation-based change-point detection framework for dynamic networks.
The proposed framework utilizes DeepWalk to map non-Euclidean network snapshots into a low-dimensional Euclidean representation space, uses fixed-reference orthogonal Procrustes transformations to standardize the coordinate orientation of all snapshots, and constructs a CUSUM-type scan statistic within this unified representation space.
Unlike methods relying on parametric network models or prespecified graph discrepancy measures, the proposed approach reduces dependence on specific generative mechanisms and manually designed similarity measures.

\item
We establish a theoretical analysis framework connecting graph representation learning with statistical change-point detection.
By characterizing the uncertainty propagation process of DeepWalk embeddings and their alignment, we derive the stochastic fluctuation bound of the statistic under no structural change, the detection condition under structural transitions, and the consistency guarantee for change-point localization.
Furthermore, we reveal the relationship among network structures, representation quality, and detection performance.

\item
We validate the effectiveness of the proposed method through extensive simulation studies and real dynamic network applications.
The experiments cover various network evolution mechanisms, structural change magnitudes, and network scales.
Furthermore, we apply the proposed method to a real-world dynamic trade network to analyze structural changes and demonstrate its practical applicability.

\end{itemize}

\subsection{Related Work}
\label{sec:related_work}

\textit{Model-based Change-point Detection in Dynamic Networks:}
An important line of research studies change-point detection in dynamic networks by constructing explicit probabilistic models for the observed graph sequences.
Classical statistical network models include Gaussian graphical models
\citep{dempster1972covariance},
stochastic block models
\citep{holland1983stochastic},
and more general probabilistic graphical models
\citep{lauritzen1996graphical}.
These models provide interpretable descriptions of dependency structures, community organizations, and latent connectivity patterns, and have served as statistical foundations for many dynamic network analysis methods.

For dynamic networks, model-based change-point detection methods typically identify change points by detecting whether the parameters governing edge formation mechanisms or latent graph structures have changed.
For example, existing studies have investigated change-point detection under various frameworks, including dynamic stochastic block models
\citep{bhattacharjee2020change},
sparse nonhomogeneous Bernoulli network models
\citep{10.1214/20-AOS1953},
dynamic stochastic block models with missing connections
\citep{enikeeva2021change},
and dynamic nonparametric random dot product graphs with dependent structures
\citep{padilla2022change}.
An important advantage of these approaches is that structural changes can usually be directly associated with changes in specific interpretable model parameters, such as block connection probabilities, community memberships, or latent positions.

However, the theoretical guarantees of these methods typically rely on the correctness of the assumed graph generation models.
For complex real-world networks, selecting and validating an appropriate probabilistic model can be difficult.
Moreover, when networks are large-scale, sparse, heterogeneous, temporally dependent, or partially observed, parameter estimation may introduce substantial computational costs.
Therefore, this paper does not require the graph sequence to follow a fully specified dynamic stochastic block model, Bernoulli network model, or latent position model.
Instead, we first use DeepWalk to summarize the random-walk connectivity patterns within each network snapshot and then detect changes from the resulting graph-level representation sequence.
Such a representation-based approach sacrifices some direct interpretability at the model-parameter level, but reduces dependence on the correctness of a specific generative mechanism.

\textit{Similarity-based Change-point Detection:}
Another line of research detects network changes by measuring discrepancies between observations before and after candidate time points.
This perspective is closely related to nonparametric change-point detection, where the goal is to identify distributional changes without fully specifying a probabilistic model.

Existing studies have proposed graph-structure-based and nearest-neighbor-based methods for detecting general distributional changes
\citep{10.1214/14-AOS1269,chen2019sequential}.
More recently, several approaches have further constructed data-driven discrepancy measures using learned graph representations or neural network architectures.
Representative methods include graph neural network-based change-point detection approaches
\citep{penaloza2024changepoint}
and graph convolutional network-based methods
\citep{sulem2024graph}.
These approaches are capable of capturing nonlinear differences and high-order structural variations that may not be reflected by direct edge-wise comparisons.

Similarity-based methods provide substantial flexibility because they do not require direct estimation of parametric network models.
However, their effectiveness strongly depends on whether the selected similarity or discrepancy measure is sensitive to the specific types of structural changes of interest.
For example, a measure designed primarily for local edge perturbations may fail to detect community restructuring, while a global graph statistic may obscure local structural changes.
Furthermore, when similarity measures are learned from data, their statistical randomness is often difficult to quantify accurately.
Different from these approaches, our method does not prespecify a distance between graphs or directly use learned similarity scores as detection statistics.
Instead, we first map all network snapshots into a common Euclidean representation space and then apply a CUSUM statistic with a clear interpretation in terms of mean comparison.
Consequently, graph comparison is transformed into a statistical inference problem in a unified representation space.

\textit{Graph Embedding and Embedding-based Change Detection:}
Graph embedding aims to map nodes or entire graphs into low-dimensional spaces while preserving informative structural properties.
Traditional dimensionality reduction techniques include principal component analysis
\citep{pearson1901liii},
principal curves
\citep{hastie1984principal,hastie1989principal},
locally linear embedding
\citep{roweis2000nonlinear},
and ISOMAP
\citep{tenenbaum2000global}.
Although these methods were not originally designed for graph data, they established a fundamental idea that complex high-dimensional observations can often be studied through low-dimensional representations.

For network data, spectral methods typically construct embeddings based on graph matrices such as adjacency matrices or graph Laplacian matrices.
Accordingly, existing dynamic network change-point detection methods have utilized Laplacian spectral embedding
\citep{huang2020laplacian,huang2024laplacian},
singular value decomposition
\citep{hewapathirana2020change},
and truncated singular value decomposition
\citep{liu2024identifying}
to construct network representations.
Other studies represent dynamic networks on constant-curvature Riemannian manifolds
\citep{8782142},
allowing latent representations to possess non-Euclidean geometric structures.

Deep graph representation learning provides another approach for
extracting structural information from networks.
DeepWalk
\citep{perozzi2014deepwalk}
learns node representations through co-occurrence relationships generated
by truncated random walks.
Qiu et al.~\citep{qiu2018network} showed that DeepWalk and several
related skip-gram-based network embedding methods trained with negative
sampling admit a unified matrix-factorization interpretation, thereby
connecting random-walk co-occurrence statistics with the spectral
structure of the underlying graph.
Graph convolutional networks
\citep{kipf2016semi}
aggregate neighborhood information through learned transformations.
Subsequently, various representation learning methods have been applied
to dynamic network analysis, including doc2vec-based representations
\citep{huang2022learning},
DynG2G
\citep{xu2022dyng2g},
VGGM
\citep{zhang2024vggm},
and TransformerG2G
\citep{varghese2024transformerg2g}.
These studies demonstrate that learned embeddings can effectively
summarize complex temporal and structural information in dynamic
networks.

Most closely related to our change-point setting, Lin et
al.~\citep{lin2023network} first embed network nodes into a latent space
and then construct a network descriptor using random walks over the
latent node positions. In contrast, our random walks generate the
co-occurrence statistics used to learn DeepWalk embeddings. We further
align the snapshot-wise embeddings, aggregate them into graph-level
representations, and derive explicit fluctuation, detection, and
localization guarantees for the resulting CUSUM statistic.

Embedding-based methods are attractive because they transform graph observations into Euclidean objects, thereby enabling the application of classical statistical techniques.
However, existing studies mainly focus on algorithmic design and empirical performance.
The stochastic behavior of change-point statistics constructed from learned embeddings remains insufficiently understood.
In particular, how to distinguish genuine signals caused by network structural transitions from errors induced by graph randomness, random-walk sampling, and embedding estimation has received limited attention.
Different from existing embedding-based approaches that mainly rely on empirical performance, this paper not only utilizes DeepWalk representations for network snapshots but also explicitly characterizes how uncertainties from network sampling, random walks, and embedding estimation propagate into downstream CUSUM statistics.
The resulting signal-to-noise separation condition explains when representation-based dynamic network change-point detection can achieve statistical reliability.

\subsection{Paper Organization and Notation}

The remainder of this paper is organized as follows.
Section~\ref{sec:problem} introduces the dynamic network change-point detection problem considered in this paper, including the dynamic network generation setting, the DeepWalk graph representation learning procedure, and the CUSUM detection framework based on graph embedding sequences.
Section~\ref{sec:theory} establishes the theoretical properties of the proposed method.
We first analyze the asymptotic behavior of the detection statistic under the hypothesis of no structural change and then investigate the signal enhancement property of the statistic under the presence of a single structural change, leading to the consistency of change-point estimation.
Section~\ref{sec:multiple} extends the proposed method to scenarios with multiple structural changes, introduces a recursive binary segmentation procedure based on local CUSUM statistics, and establishes theoretical guarantees for multiple change-point detection and localization.
Section~\ref{sec:simulation} evaluates the detection performance of the proposed method through systematic simulation studies under different dynamic network evolution mechanisms and structural change patterns.
Section~\ref{sec:real} further applies the proposed method to real dynamic network data to demonstrate its effectiveness in practical structural change analysis tasks.
Finally, Section~\ref{sec:conclusion} concludes the paper and discusses future research directions.
Proofs of the main theoretical results are provided in the Appendix.

Before proceeding, we introduce several notations used throughout this paper.
We use $\|\cdot\|_2$ to denote the Euclidean norm of a vector,
$\|\cdot\|_{\mathrm{op}}$ to denote the spectral norm of a matrix,
and $\|\cdot\|_F$ to denote the Frobenius norm of a matrix.
When there is no ambiguity, the vector norm $\|\cdot\|_2$ is abbreviated as $\|\cdot\|$.
For a positive integer $n$, let
$\mathbf 1=(1,\ldots,1)^\top\in\mathbb R^n$
denote the all-one vector,
and let
$\mathbb S^{d-1}
=\{u\in\mathbb R^d:\|u\|_2=1\}$
denote the unit sphere in $\mathbb R^d$.
The quantity $\rho=\rho_n\in(0,1]$ denotes the network sparsity parameter, such that $n\rho$ characterizes the typical expected-degree scale of the network. For a sequence of random variables $X_T$ and a sequence of positive numbers $a_T$,
$X_T=O_P(a_T)$ denotes that
$X_T/a_T$ is bounded in probability;
$a_T\gg b_T$ denotes that
$b_T/a_T\to0$.
The notation $\overset{P}{\longrightarrow}$ denotes convergence in probability.
For matrices $U,V\in\mathbb R^{n\times K}$,
if their columns form two orthonormal bases of two $K$-dimensional subspaces, respectively,
then $\Theta(U,V)$ denotes the matrix of principal angles between the two subspaces,
and
$\|\sin\Theta(U,V)\|_{\mathrm{op}}$
denotes the sine of the largest principal angle.

\section{Problem Formulation and Methodology}
\label{sec:problem}
\label{sec:method}

\subsection{Dynamic Networks and Change-point Detection}

This paper considers a dynamic network sequence consisting of a series of network snapshots observed sequentially over time,
$\mathcal{G}_{1:T}=\{G_t\}_{t=1}^{T},$
where the \(t\)-th network snapshot is represented as
$G_t=(V,E_t),$
with \(V\) denoting the fixed node set and \(E_t\) denoting the edge set at time \(t\). Each \(G_t\) is an undirected graph, and the number of nodes satisfies \(|V|=n\).
The dynamic network sequence $\mathcal G_{1:T}$ describes the evolution of relational structures among the same set of entities over time, where each network snapshot characterizes the node interaction relationships at a specific time point.

Let $\mathcal{G}_n$ denote the graph space consisting of all undirected graphs with $n$ nodes.
We regard each dynamic network snapshot as a random object defined on the graph space $\mathcal{G}_n$ and assume that
$G_t\sim\mathbb P_t,\quad t=1,\ldots,T$,
where $\mathbb P_t$ denotes the network generation distribution at time $t$.
It is important to emphasize that although the observed network snapshot $G_t$ is a deterministic graph structure, the above probabilistic representation does not imply that an individual observed graph itself is random.
Instead, it is introduced to characterize the intrinsic uncertainty in the network generation process.
This statistical modeling perspective enables the structural evolution of dynamic networks to be described through time-varying probability distributions $\mathbb P_t$.
In particular, changes in $\mathbb P_t$ may correspond to variations in network structural characteristics, such as community structures, node connectivity patterns, latent representation spaces, or other high-level network properties.

\begin{definition}[Structural Change in Dynamic Networks]
\label{def:network-change}
If there exists $k^*\in\{1,\ldots,T-1\}$ such that
\begin{equation*}
\mathbb P_1=\cdots=\mathbb P_{k^*}\neq\mathbb P_{k^*+1}=\cdots=\mathbb P_T,
\end{equation*}
then a structural change occurs at time $k^*$, which is referred to as a change point.
\end{definition}

Definition~\ref{def:network-change} does not require $\mathbb P_t$ to belong to any specific network model, and therefore covers stochastic block models, random dot product graphs, graphons, and more general latent space models.
Since direct statistical inference on the graph space is challenging, we consider a representation map that explicitly includes embedding, coordinate alignment, and graph-level aggregation,
$\phi:\mathcal G_n\to\mathbb R^d$,
and define $Z_t=\phi(G_t;X_{\rm ref})$, where $X_{\rm ref}$ is a reference embedding held fixed across time.
If the embedding preserves the dominant structures associated with the network generation mechanism, changes in $\mathbb P_t$ will induce changes in the distribution or moments of $Z_t$, thereby transforming the graph-valued change-point problem into a multivariate change-point problem in Euclidean space.
Therefore, the hypothesis testing problem considered in this paper is
\begin{align}
	H_0:\quad
	& \mathcal L(Z_1)=\cdots=\mathcal L(Z_T),
	\notag\\
	H_1:\quad
	& \exists\,k^*\in\{1,\ldots,T-1\}\ \text{such that}
	\label{eq:hypothesis}\\
	& \mathcal L(Z_1)=\cdots=\mathcal L(Z_{k^*})
	\neq
	\mathcal L(Z_{k^*+1})=\cdots=\mathcal L(Z_T).
	\notag
\end{align}

\subsection{DeepWalk Embedding Representation}

Given a static undirected graph $G=(V,E)$, where $V=\{1,\ldots,n\}$,
let $A=(A_{ij})_{1\leq i,j\leq n}$ denote the adjacency matrix of graph $G$, where
$A_{ij}=1$ indicates that an edge exists between nodes $i$ and $j$, and
$A_{ij}=0$ otherwise.
For node $i$, define its degree as
$d_i=\sum_{j=1}^{n}A_{ij}$,
and let the degree matrix be
$D_A=\operatorname{diag}(d_1,\ldots,d_n)$.
The corresponding random-walk transition probability matrix is defined as
$P=D_A^{-1}A$,
where $P_{ij}$ represents the probability that a random walk moves from node $i$ to node $j$.
The main method and theory assume that all nodes used in an embedding have positive degree. If a snapshot contains isolated nodes, they are excluded from that snapshot's walks, and common active nodes are subsequently used as anchors for temporal alignment.
Suppose that each node generates $r$ random walks of length $L$.
Let
$w^{(m)}=(w_1^{(m)},\ldots,w_L^{(m)})$
denote the $m$-th walk sequence, and let $T_w$ denote the context window size.

Based on the local co-occurrence relationships in random walks, we define a symmetric co-occurrence matrix $C=(C_{ij})$, where

\begin{align*}
C_{ij}={}&\sum_{h=1}^{T_w}\sum_{m=1}^{r}\sum_{\ell=1}^{L-h}
\mathbf 1\{w_\ell^{(m)}=i,w_{\ell+h}^{(m)}=j\}\\
&+\sum_{h=1}^{T_w}\sum_{m=1}^{r}\sum_{\ell=1}^{L-h}
\mathbf 1\{w_\ell^{(m)}=j,w_{\ell+h}^{(m)}=i\}.
\end{align*}

$C_{ij}$ describes the co-occurrence frequency of nodes $i$ and $j$ within random-walk windows, and therefore captures both local connectivity patterns and high-order structural information induced by multi-step random walks.

Let
$X=(X_1,\ldots,X_n)^\top\in\mathbb R^{n\times d}$
and
$Y=(Y_1,\ldots,Y_n)^\top\in\mathbb R^{n\times d}$
denote the node embedding matrix and context embedding matrix, respectively.
Here,
$X_i,Y_i\in\mathbb R^d$
represent the low-dimensional latent representations of node $i$,
and $d$ denotes the embedding dimension.
DeepWalk learns low-dimensional representations by minimizing the negative log-likelihood

\begin{align}
\mathcal L(C;X,Y)&=-\sum_{i=1}^n\sum_{j=1}^n C_{ij}\log Q_{ij},\notag\\
Q_{ij}&=\frac{\exp(\langle X_i,Y_j\rangle)}
{\sum_{k=1}^n\exp(\langle X_i,Y_k\rangle)}.
\label{eq:deepwalk-loss}
\end{align}

Applying the above procedure independently to each snapshot $G_t$ yields a node embedding matrix
$X_t=(X_{t,1},\ldots,X_{t,n})^\top\in\mathbb R^{n\times d}$.
Because DeepWalk involves random initialization, random-walk sampling, and non-convex optimization, independently fitted embeddings do not have naturally matched coordinate axes. Even when two snapshots have the same structure, their learned node configurations may differ by a global rotation or reflection. Directly pooling and comparing the unaligned $X_t$ may therefore turn an algorithmic change of orientation into a spurious network change.

We remove this ambiguity by aligning every embedding to a reference
$X_{\rm ref}\in\mathbb R^{n\times d}$ that is fixed across time. In the theoretical analysis, $X_{\rm ref}$ is treated as fixed or estimated from an independent reference sample. In implementation, when no external reference is available, we use $X_{\rm ref}=X_1$ and assess sensitivity to the reference choice. For every $t$, define the orthogonal Procrustes estimator
\begin{equation}
	\begin{aligned}
		\widehat Q_t
		&=
		\arg\min_{Q\in\mathcal O(d)}
		\|X_tQ-X_{\rm ref}\|_F^2,\\
		\mathcal O(d)
		&=
		\left\{
		Q\in\mathbb R^{d\times d}:Q^\top Q=I_d
		\right\}.
	\end{aligned}
	\label{eq:procrustes}
\end{equation}
If $X_t^\top X_{\rm ref}=U_t\Sigma_tV_t^\top$ is a singular value decomposition, one solution is $\widehat Q_t=U_tV_t^\top$. The aligned node embedding is
\begin{equation}
\widetilde X_t=X_t\widehat Q_t.
\label{eq:aligned-node-embedding}
\end{equation}
Orthogonal transformations preserve inner products and Euclidean distances, so alignment standardizes only the overall coordinate orientation without changing the within-snapshot embedding geometry.

When the active node set varies with time, let $V_t^+$ and $V_{\rm ref}^+$ denote the active nodes in snapshot $t$ and in the reference snapshot, and define the common anchors $\mathcal A_t=V_t^+\cap V_{\rm ref}^+$. We then solve
\begin{equation}
\widehat Q_t
=
\arg\min_{Q\in\mathcal O(d)}
\|X_{t,\mathcal A_t}Q-X_{{\rm ref},\mathcal A_t}\|_F^2
\label{eq:anchor-procrustes}
\end{equation}
and apply the resulting transformation to all active nodes in snapshot $t$. This extension requires sufficiently many informative anchors to identify the $d$ coordinate directions.

The Procrustes step should be interpreted as a coordinate normalization associated with the dominant spectral-subspace formulation used in our theory. We do not claim that an unrestricted two-matrix Skip-Gram parameterization is identifiable only up to orthogonal transformations. Instead, the theoretical node representation is viewed as an empirical basis of the dominant DeepWalk operator subspace, whose rotation and reflection ambiguity is removed by~\eqref{eq:procrustes}.

After alignment, we construct a graph-level representation by mean pooling,
\begin{equation}
Z_t
=\frac1n\sum_{i=1}^n\widetilde X_{t,i}
=\frac1n\mathbf 1_n^\top\widetilde X_t
\in\mathbb R^d.
\label{eq:graph-embedding}
\end{equation}
Thus, the dynamic graph sequence is converted into a low-dimensional multivariate sequence $\{Z_t\}_{t=1}^T$ expressed in a common reference coordinate system. Mean pooling is permutation invariant after node correspondence has been used for alignment, and it provides a simple first-moment summary of the aligned node-embedding distribution. Other permutation-invariant graph-level aggregators may replace it when changes beyond the first moment are of interest.

\subsection{Change-point Model Based on Embedding Mean}

Based on the graph-level embedding sequence
$\{Z_t\}_{t=1}^{T}$
obtained in the previous section, we further transform the dynamic network change-point detection problem into a mean change detection problem in the embedding space.
Let
$\mu_t=\mathbb E(Z_t)\in\mathbb R^d$
denote the mean of the graph embedding corresponding to the $t$-th network snapshot.
Since graph embeddings are learned by accumulating network structural information through random walks, changes in the dynamic network generation distribution $\mathbb P_t$ may induce corresponding changes in the embedding distribution and its statistical moments.
Therefore, this paper identifies potential structural transitions in dynamic networks by detecting changes in the embedding mean.

We first consider the single change-point case, where there exists an unknown change point
$k^*\in\{1,\ldots,T-1\}$,
such that
$\mu_1=\cdots=\mu_{k^*}
\neq
\mu_{k^*+1}
=\cdots
=\mu_T .$

The corresponding hypothesis testing problem is
\begin{align*}
H_0:&\quad \mu_1=\cdots=\mu_T,\notag\\
H_1:&\quad \exists k^*
\ \text{such that}\ 
\mu_1=\cdots=\mu_{k^*}
\neq
\mu_{k^*+1}
=\cdots
=\mu_T .
\end{align*}

This formulation does not require explicit estimation of community labels or specification of particular network generation model parameters.
Instead, it characterizes network structural evolution indirectly through changes in statistical moments of graph embeddings.
Although this paper mainly focuses on changes in the first-order moment of embeddings, the framework can be naturally extended to more general distributional changes, including changes in embedding covariance structures, kernel mean embeddings, and distributional shifts measured by Wasserstein distances.

\subsection{CUSUM Statistic and Change-point Estimation}

Based on the embedding sequence
$\{Z_t\}_{t=1}^{T}$,
for any candidate change-point location
$k\in\{1,\ldots,T-1\}$,
we define the multivariate CUSUM statistic as

\begin{equation*}
S_T(k)
=
\sqrt{\frac{k(T-k)}{T}}
\left(
\frac1k\sum_{t=1}^{k}Z_t
-
\frac1{T-k}\sum_{t=k+1}^{T}Z_t
\right).
\end{equation*}

The scaling factor
$\sqrt{k(T-k)/T}$
is used to balance the fluctuation of the statistic across different candidate locations, thereby reducing the variance differences caused by change points located near the sequence boundaries.

Furthermore, we define the global test statistic and the corresponding change-point estimator as

\begin{equation*}
W_T
=
\max_{1\leq k<T}
\|S_T(k)\|,
\qquad
\widehat{k}
\in
\arg\max_{1\leq k<T}
\|S_T(k)\|.
\end{equation*}

Under the null hypothesis $H_0$, the embedding means corresponding to all network snapshots remain stable, and therefore $W_T$ mainly reflects variations caused by the DeepWalk representation learning procedure and random network fluctuations.
Under the alternative hypothesis $H_1$, the embedding means before and after the change point are different, causing $S_T(k)$ to exhibit a significant peak around the true change point.
Therefore, maximizing the CUSUM statistic enables the localization of structural change points in dynamic networks.
Figure~\ref{fig1} summarizes the overall workflow of the proposed method.
The corresponding algorithm is presented in Algorithm~\ref{alg:deepwalk-cpd}.

\begin{figure}[t]
\centering
\includegraphics[width=\columnwidth]{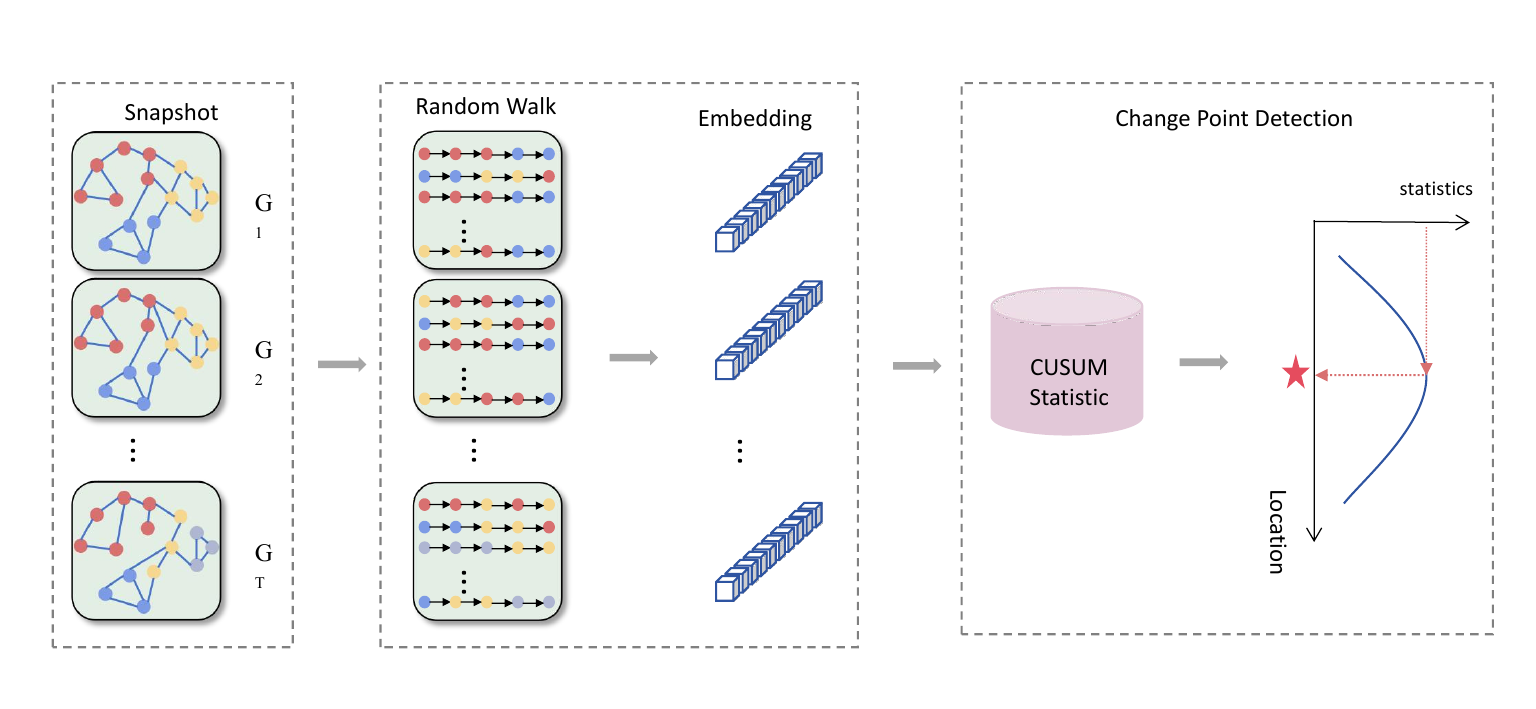}
\caption{The workflow of the proposed DeepWalk-based dynamic network change-point detection method, including fixed-reference coordinate alignment before graph-level pooling.}
\label{fig1}
\end{figure}

\begin{algorithm}[t]
\caption{DeepWalk--CUSUM}
\label{alg:deepwalk-cpd}
\KwIn{Dynamic networks $\mathcal G_{1:T}=\{G_t=(V_t,E_t)\}_{t=1}^{T}$; walks per node $r$; walk length $L$; context window $T_w$; embedding dimension $d$; optional reference $X_{\rm ref}$.}
\KwOut{Global statistic $W_T$ and estimated change point $\widehat k$.}
\tcp{Snapshot-level representation learning}
\For{$t\gets1$ \KwTo $T$}{
  Construct $P_t=D_{A_t}^{-1}A_t$ from $G_t$\;
  Generate $r$ random walks of length $L$ from every node\;
  Construct the co-occurrence matrix $C_t$ using window $T_w$\;
  Learn $X_t\in\mathbb R^{|V_t|\times d}$ from the DeepWalk objective\;
}
\If{$X_{\rm ref}$ is not supplied}{
  $X_{\rm ref}\gets X_1$\;
}
\tcp{Fixed-reference alignment and graph-level pooling}
\For{$t\gets1$ \KwTo $T$}{
  Set anchors $\mathcal A_t\gets V_t\cap V_{\rm ref}$\;
  Compute $X_{t,\mathcal A_t}^{\top}X_{{\rm ref},\mathcal A_t}
  =U_t\Sigma_tV_t^{\top}$\;
  $\widehat Q_t\gets U_tV_t^{\top}$ and $\widetilde X_t\gets X_t\widehat Q_t$\;
  $Z_t\gets |V_t|^{-1}\sum_{i\in V_t}\widetilde X_{t,i}$\;
}
\tcp{CUSUM scan}
\For{$k\gets1$ \KwTo $T-1$}{
  $S_T(k)\gets\sqrt{k(T-k)/T}\,
  \bigl(\overline Z_{1:k}-\overline Z_{k+1:T}\bigr)$\;
}
$W_T\gets\max_{1\le k<T}\|S_T(k)\|$\;
$\widehat k\gets\arg\max_{1\le k<T}\|S_T(k)\|$\;
\Return $(W_T,\widehat k)$\;
\end{algorithm}

\section{Theoretical Analysis}
\label{sec:theory}

This section establishes a theoretical framework for dynamic network change-point detection based on DeepWalk representations, with the goal of characterizing how random perturbations arising from the representation learning process affect the final CUSUM scan statistic.
Since DeepWalk embeddings are generated jointly by random-walk sampling and nonlinear optimization, their statistical properties cannot be directly obtained from the original network structures.
Therefore, we develop the theoretical analysis from two perspectives: the population representation structure and finite-sample perturbations.

Specifically, we first construct the population DeepWalk co-occurrence operator and characterize the dominant geometric properties of the population embedding representation through its spectral structure.
Then, we analyze the empirical operator perturbation caused by finite network size and finite random-walk sampling, and further establish the propagation relationship between such perturbations and the stochastic fluctuations of graph-level embeddings.
Based on this analysis, we study the impact of representation fluctuations on the CUSUM statistic and prove that reliable change-point detection and consistent localization can be achieved when the deterministic signal induced by structural changes in the embedding space dominates the random fluctuation scale.

Therefore, this paper establishes a complete error propagation mechanism from network structures, random-walk operators, DeepWalk representation perturbations, to CUSUM statistics, providing theoretical foundations for understanding the relationship between representation stability and dynamic network change-point detection performance.
We first present all basic assumptions required for the theoretical analysis.

\begin{assumption}
\label{ass:overall}

The following conditions hold.

\begin{enumerate}

\item[(A1)] \textbf{Spectral gap condition.}
Let
$\widetilde P=D_A^{-1/2}AD_A^{-1/2}=U\Lambda U^\top$
be the spectral decomposition of the symmetrically normalized transition matrix, where
$\Lambda=\mathrm{diag}(\lambda_1,\ldots,\lambda_n)$.
Assume that there exists an integer $K$ such that
$\lambda_K>\lambda_{K+1}$,
and define the spectral gap as
$\gamma_K=\lambda_K-\lambda_{K+1}>0$.

\item[(A2)] \textbf{Small initialization condition.}
Let
$W^{(s)}=(X^{(s)},Y^{(s)})^\top$
denote the node representations and context representations at the $s$-th gradient update of DeepWalk.
Assume that the initial embedding satisfies
$\|W^{(0)}\|_F=o(1)$.

\item[(A3)] \textbf{Embedding stochastic fluctuation condition.}
Under the null hypothesis, the graph-level embedding representation satisfies
$Z_t=\mu+\varepsilon_t$,
where
$\mu=\mathbb E(Z_t)$,
and the random error sequence
$\{\varepsilon_t\}_{t=1}^{T}$
is independent with
$\mathbb E(\varepsilon_t)=0$.
Furthermore, there exists a constant
$\sigma_\varepsilon>0$
such that for any
$t=1,\ldots,T$,
any unit vector
$u\in\mathbb S^{d-1}$,
and any
$\lambda\in\mathbb R$,
$\mathbb E\exp
\left(
\lambda
\left\langle u,\varepsilon_t\right\rangle
\right)
\leq
\exp
\left(
\frac{\lambda^2\sigma_\varepsilon^2}{2}
\right).$
In other words, the projection
$\langle u,\varepsilon_t\rangle$
of $\varepsilon_t$ onto any unit direction is a sub-Gaussian random variable, and its sub-Gaussian parameter is uniformly controlled by the same constant
$\sigma_\varepsilon$.

\item[(A4)] \textbf{Interior change-point condition.}
There exists an unknown change point
$k^*\in\{1,\ldots,T-1\}$,
such that the embedding mean satisfies
$\mu_t=\mu^{(1)}$ for $t\leq k^*$,
and
$\mu_t=\mu^{(2)}$ for $t>k^*$.
Moreover, the change point satisfies an interior condition: there exists a constant
$\tau_0\in(0,1/2)$
such that
$\tau_0T\leq k^*\leq(1-\tau_0)T,$
and define the signal strength as
$\Delta=\|\mu^{(1)}-\mu^{(2)}\|>0 .$
\end{enumerate}

\end{assumption}

Assumption~\ref{ass:overall} summarizes the key conditions required for the theoretical analysis of DeepWalk representation learning and change-point detection.
Condition (A1) controls the spectral structure of the random-walk transition matrix.
Since DeepWalk representations depend on high-order neighborhood information accumulated during random walks, the spectral gap condition ensures that dominant structural information can be stably separated from other spectral components.

Condition (A2) characterizes the local stability of the DeepWalk optimization process.
Under the small initialization condition, the nonlinear gradient updating process can be analyzed through the population co-occurrence operator, thereby establishing the theoretical connection between network structures and embedding representations.

Condition (A3) describes the source of stochastic fluctuations in graph-level embedding representations.
This condition jointly characterizes finite network size, random-walk sampling, and representation learning errors as stochastic perturbations in the embedding space, providing the foundation for subsequent concentration analysis of the CUSUM statistic.

Condition (A4) is a standard structural condition in change-point detection.
The interior change-point assumption prevents the true change point from being too close to the boundaries of the observation interval, ensuring that the CUSUM statistic forms a stable and identifiable peak around the true change point.

Overall, these assumptions respectively control the spectral stability, optimization stability, random errors, and detectability of structural changes in DeepWalk representations, providing theoretical foundations for establishing representation perturbation bounds, CUSUM fluctuation bounds, and consistency of change-point localization.

\subsection{Population DeepWalk Operator and Spectral Representation}

This section studies the population structure underlying the DeepWalk embedding process.
The main objective is to characterize the implicit network operator learned by DeepWalk and establish its relationship with the spectral geometry of the underlying network.

Consider a static graph
$G=(V,E)$,
with adjacency matrix
$A\in\mathbb R^{n\times n}$,
and the corresponding degree matrix
$D_A=\mathrm{diag}(A\mathbf1).$
The random-walk transition matrix is defined as
$P=D_A^{-1}A.$
DeepWalk generates random-walk sequences on the graph according to the transition matrix $P$.
Specifically, when a random walk is located at node $i$, it moves to node $j$ with probability $P_{ij}$.
Therefore, after $t$ steps of random walk, the transition probability from node $i$ to node $j$ is given by
$(P^t)_{ij}$.
To formally describe this process, we introduce the following population co-occurrence operator.

\begin{definition}[Population DeepWalk Operator]
\label{def:population_deepwalk}

Let $T_w$ be the context window size of DeepWalk and $L$ be the random-walk length.
The population DeepWalk operator is defined as

\[
M=
2\sum_{t=1}^{T_w}(L-t)\Pi P^t,
\]
where
$\Pi=D_A/\mathrm{vol}(G)$
is the stationary distribution matrix of the random walk,
and
$\mathrm{vol}(G)=\sum_{i=1}^{n}d_i .$

\end{definition}

The population DeepWalk operator $M$ characterizes the expected node co-occurrence structure generated by the random-walk mechanism.
More specifically, the $(i,j)$-th element of matrix $M$ represents the expected number of times that nodes $i$ and $j$ co-occur within the given window range along random-walk trajectories.
The optimization objective of DeepWalk essentially performs an implicit factorization of this population co-occurrence operator.

The population DeepWalk operator integrates multiple powers of the random-walk transition matrix
$P,P^2,\ldots,P^{T_w}$.
Therefore, DeepWalk can capture not only local adjacency relationships but also high-order connectivity patterns and latent community structures in networks.
We next establish the relationship between the DeepWalk operator and the spectral structure of graph random walks.

Since the random-walk matrix $P$ is generally not symmetric, we define the symmetrically normalized transition operator as

\[
\widetilde P
=
D_A^{1/2}PD_A^{-1/2}
=
D_A^{-1/2}AD_A^{-1/2}.
\]

The matrix $\widetilde P$ is symmetric and therefore admits the spectral decomposition
$\widetilde P
=
U\Lambda U^\top,$
where
$\Lambda=\mathrm{diag}(\lambda_1,\ldots,\lambda_n)$
is the eigenvalue matrix satisfying
$1=\lambda_1\geq \lambda_2\geq\cdots\geq\lambda_n,$
and
$U=(u_1,\ldots,u_n)$
is the corresponding orthogonal eigenvector matrix.

Using the above spectral decomposition, the population DeepWalk operator can be further represented as

\[
M=
2\sum_{t=1}^{T_w}(L-t)
D_A^{1/2}
U\Lambda^tU^\top
D_A^{-1/2}\Pi .
\]

It can be seen that the spectral structure of the DeepWalk operator is determined by the dominant eigenvectors of the random-walk transition matrix.
In particular, when the network contains latent community structures, the leading eigenvectors of the random-walk transition matrix can characterize the large-scale connectivity patterns of the network.
Therefore, DeepWalk embeddings can approximately recover the dominant spectral subspace determined by the underlying network structure.

Let
$U_K=\mathrm{span}(u_1,\ldots,u_K)$
denote the leading $K$-dimensional spectral subspace of the random-walk transition operator.
The following proposition establishes the dominant spectral structure of the population DeepWalk operator.

\begin{proposition}[Spectral Structure of the DeepWalk Operator]
\label{prop:deepwalk_spectral}

Suppose condition \textnormal{(A1)} in Assumption~\ref{ass:overall} holds.
Then the leading eigenspace of the population DeepWalk operator $M$ coincides with the leading eigenspace of the graph random-walk transition operator $P$.
More specifically,
$\mathrm{span}(u_1,\ldots,u_K)$
is invariant under the population DeepWalk operator $M$.

\end{proposition}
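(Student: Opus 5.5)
The plan is to make the displayed spectral representation of $M$ explicit and read off its eigenstructure. Since $\Pi=D_A/\mathrm{vol}(G)$ and $P^t=D_A^{-1/2}\widetilde P^{\,t}D_A^{1/2}$, we have
\[
\Pi P^t=\frac{1}{\mathrm{vol}(G)}D_A^{1/2}\widetilde P^{\,t}D_A^{1/2}.
\]
Summing over the window gives
\[
M=\frac{2}{\mathrm{vol}(G)}\,D_A^{1/2}\,U f(\Lambda)U^\top D_A^{1/2},
\qquad
f(\lambda)=\sum_{t=1}^{T_w}(L-t)\lambda^t .
\]
This is the displayed formula preceding the proposition, since $D_A^{-1/2}\Pi=D_A^{1/2}/\mathrm{vol}(G)$. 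In particular $M$ is symmetric, and it is a congruence transform by $D_A^{1/2}$ of the spectral function $f(\widetilde P)=Uf(\Lambda)U^\top$.

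\textbf{Step 1 (invariance).} I would work in the symmetrically normalized coordinates in which this representation is written, i.e.\ with the operator $\mathrm{vol}(G)D_A^{-1/2}MD_A^{-1/2}/2=f(\widetilde P)$. This is the natural coordinate system in which ``the leading eigenspace of $M$'' is to be read, because $M$ and $\widetilde P$ then act on the same vectors. Here $f(\widetilde P)u_k=f(\lambda_k)u_k$ for every $k$, so $\mathrm{span}(u_1,\ldots,u_K)$ is invariant.

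For $M$ literally, write $Mu_k=\tfrac{2}{\mathrm{vol}(G)}D_A^{1/2}f(\widetilde P)D_A^{1/2}u_k$. Invariance then follows whenever $D_A^{1/2}$ maps $\mathrm{span}(u_1,\ldots,u_K)$ into itself. This holds, for instance, when degrees are constant on the supports involved; the regular and degree-homogeneous block model cases are the relevant examples. I would state this reduction explicitly.

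\textbf{Step 2 (``leading'').} I would show that $f$ preserves the ordering of the top eigenvalues. On $[0,1]$, $f$ is strictly increasing, because all of its coefficients $L-t$ are positive. For $\lambda=-x<0$, the sum $f(-x)=\sum_t(L-t)(-1)^tx^t$ is an alternating sum whose terms decrease in magnitude, since both $L-t$ and $x^t$ are nonincreasing in $t$. Its first term is negative, so $f(-x)\le0$. Hence, provided $\lambda_K>0$, the values $f(\lambda_1)\ge\cdots\ge f(\lambda_K)$ are the $K$ largest eigenvalues of $f(\widetilde P)$. Moreover $f(\lambda_K)>f(\lambda_{K+1})$ by (A1) and strict monotonicity. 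This yields a gap of order $\gamma_K\min_{[\lambda_{K+1},\lambda_K]}f'$, which is useful later for Davis--Kahan arguments.

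\textbf{Main obstacle.} The $D_A^{1/2}$ congruence is the crux. Without degree homogeneity, $M$ is congruent to $f(\widetilde P)$ but not similar to it, so its eigenvectors are not exactly the $u_k$. I would handle this by proving the statement in the normalized coordinates used by the paper's spectral display, where the claim is exact. I would then remark that the invariance of $\mathrm{span}(u_1,\ldots,u_K)$ under $M$ itself holds exactly in the degree-homogeneous case, and up to the congruence factor $D_A^{1/2}$ in general.

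A secondary issue is the sign requirement $\lambda_K>0$. I would either impose it alongside (A1) or note that it holds in assortative community models, where the leading $K$ eigenvalues are positive.
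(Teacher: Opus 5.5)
Your identity $M=\frac{2}{\mathrm{vol}(G)}D_A^{1/2}f(\widetilde P)D_A^{1/2}$, equivalently $M=2\Pi f(P)$, is correct. The congruence you flag is exactly where the paper's own proof goes wrong. The paper argues that $M$ is ``a polynomial spectral transformation'' of $P$, so it shares the eigenvectors of $P$, so $\mathrm{span}(u_1,\ldots,u_K)$ is $M$-invariant. Both steps fail unless $D_A$ is a multiple of the identity. The left factor $\Pi=D_A/\mathrm{vol}(G)$ is not a polynomial in $P$, and the eigenvectors of $P$ are $D_A^{-1/2}u_k$, not $u_k$.

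The proposition as written is in fact false. Take the triangle on $\{1,2,3\}$ with a pendant vertex $4$ attached to $1$, and set $T_w=1$, $L\ge2$, $K=1$. The graph is connected, so $\lambda_1=1>\lambda_2$ and (A1) holds, with $u_1\propto D_A^{1/2}\mathbf 1=(\sqrt3,\sqrt2,\sqrt2,1)^\top$. But $M=2(L-1)A/\mathrm{vol}(G)$, and $AD_A^{1/2}\mathbf 1=(2\sqrt2+1,\sqrt3+\sqrt2,\sqrt3+\sqrt2,\sqrt3)^\top$ is not parallel to $D_A^{1/2}\mathbf 1$. This example even satisfies your extra condition $\lambda_K>0$.

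So no proof of the literal claim exists. Present your normalized-coordinate version as a correction of the statement, not as its ``natural reading''. What the paper's polynomial idea correctly gives is that $\Pi^{-1}M/2=f(P)$ leaves $D_A^{-1/2}\mathrm{span}(u_1,\ldots,u_K)$ invariant, i.e.\ $MD_A^{-1/2}u_k=2f(\lambda_k)\,\Pi D_A^{-1/2}u_k$. This is equivalent to your statement about $D_A^{-1/2}MD_A^{-1/2}$ and coincides with the paper's claim when $G$ is regular. Also, $M$ is built from the observed graph, and an observed block-model graph does not have exactly constant degrees. Your degree-homogeneous examples therefore apply to the expected weighted graph, not to $G$ itself.

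Your Step 2 has no counterpart in the paper, which never justifies the word ``leading'', and it is genuinely needed. The function $f$ is not monotone on $[-1,0]$: for $T_w=2$ and $L\ge5$ one has $f(-0.9)>f(-0.5)$, so your hypothesis $\lambda_K>0$ cannot simply be dropped. A few further points:
\begin{itemize}
\item Your monotonicity and alternating-sum arguments use $L-t\ge0$ for all $t\le T_w$, i.e.\ $T_w\le L$. The definition of $M$ does not enforce this, so state it as a hypothesis.
\item Your gap estimate $\gamma_K\min_{[\lambda_{K+1},\lambda_K]}f'$ presumes $\lambda_{K+1}\ge0$. Otherwise the relevant separation $f(\lambda_K)-\max_{j>K}f(\lambda_j)$ is bounded below by $f(\lambda_K)$ instead.
\item Suppose $D_A^{1/2}$ preserves $\mathrm{span}(u_1,\ldots,u_K)$ without being scalar. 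Then invariance under $M$ holds, but ``leading'' for $M$ itself needs an extra degree-ratio condition, such as $f(\lambda_K)\,d_{\min}>\max\{f(\lambda_{K+1}),0\}\,d_{\max}$. The reason is that congruence by $D_A^{1/2}$ distorts Rayleigh quotients by factors between $d_{\min}$ and $d_{\max}$.
\end{itemize}
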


This proposition shows that DeepWalk does not learn arbitrary low-dimensional representations, but theoretically recovers the dominant spectral geometric structure induced by the random-walk transition operator.
Therefore, when the dynamic network generation mechanism changes, variations in the underlying random-walk operator will further lead to changes in DeepWalk spectral embedding representations, providing the foundation for the subsequent change-point detection theory based on DeepWalk representations.

\subsection{DeepWalk Optimization Dynamics and Embedding Concentration}

The previous section analyzed the spectral structure of the population DeepWalk operator and showed that the main structural information contained in DeepWalk representations originates from the dominant spectral space of the random-walk transition operator.
However, in practical applications, DeepWalk embeddings are generated from finite random-walk samples and obtained through a nonlinear optimization process.
Therefore, it is necessary to further analyze the deviation between empirical DeepWalk representations and their population counterparts.
Let $C$ denote the empirical co-occurrence matrix obtained from finite random-walk sequences, and let $M$ denote the population DeepWalk operator defined in the previous section.
Then the empirical co-occurrence matrix can be expressed as
$C=M+E,$
where $E$ represents the random perturbation caused by finite random-walk sampling.

On the other hand, DeepWalk obtains low-dimensional embeddings by optimizing the objective function between node representations and context representations.
Let $X$ and $Y$ denote the node embedding matrix and context embedding matrix, respectively.
The DeepWalk objective function can be written as
\[
\mathcal L(X,Y)
=
-\sum_{i,j}C_{ij}
\log
\frac{\exp(x_i^\top y_j)}
{\sum_l\exp(x_i^\top y_l)} .
\]
where $x_i$ and $y_j$ represent the embedding vector of node $i$ and the context vector of node $j$, respectively.
Due to the nonlinear nature of the above optimization procedure, directly analyzing its statistical properties is challenging.
Therefore, we utilize the small initialization condition to investigate the local behavior of the DeepWalk gradient updating process around the population operator.
Let $W^{(s)}$ denote the parameter matrix after the $s$-th gradient update, namely,
$W^{(s)}=(X^{(s)},Y^{(s)})^\top .$
Under the small initialization condition, the optimization dynamics of DeepWalk can be expanded locally around the population co-occurrence operator.
This condition guarantees that the nonlinear optimization process of DeepWalk can be theoretically analyzed through local linearization. The following theorem formalizes this local linearization and characterizes the resulting approximation error.

\begin{theorem}[Linearization of DeepWalk Dynamics]
\label{thm:linearized_deepwalk}

Under conditions \textnormal{(A1-A2)} in Assumption~\ref{ass:overall},
the gradient updating process of DeepWalk can be represented as

\[
W^{(s+1)}
=
\mathcal{L}_{M}W^{(s)}
+
R^{(s)},
\]
where
$\mathcal{L}_{M}$
denotes the linear updating operator determined by the population DeepWalk operator $M$,
and the remainder term satisfies

\[
\|R^{(s)}\|_F
=
O(\|W^{(s)}\|_F^2).
\]

\end{theorem}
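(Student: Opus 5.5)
We prove the statement by a second-order Taylor expansion of the DeepWalk gradient map around the origin $W=0$. The small-initialization condition (A2) makes the origin the natural expansion point. Write one gradient step with step size $\eta>0$ as $W^{(s+1)}=W^{(s)}-\eta\nabla\mathcal L(W^{(s)})$, where $\mathcal L$ is the softmax objective~\eqref{eq:deepwalk-loss} evaluated at the co-occurrence weights. In the linearization we replace the weights by their population counterpart $M$ of Definition~\ref{def:population_deepwalk}; the empirical perturbation $E=C-M$ is deferred to the subsequent concentration analysis. The goal is to identify $\mathcal L_M$ as $I$ plus $\eta$ times a fixed linear map built from $M$. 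All terms of order at least two in $W$ are then collected into $R^{(s)}$.

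\textbf{Step 1 (explicit gradients).} Set $S=XY^\top$ and $Q=\operatorname{softmax}_{\rm row}(S)$. Let $c_i=\sum_j M_{ij}$ and $D_c=\operatorname{diag}(c_i)$. A direct computation gives
\[
\nabla_X\mathcal L=-(M-D_cQ)Y,\qquad \nabla_Y\mathcal L=-(M-D_cQ)^\top X.
\]

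\textbf{Step 2 (expansion of the softmax).} At $S=0$ we have $Q=\tfrac1n\mathbf 1\mathbf 1^\top$. The map $S\mapsto Q(S)$ is smooth, with Jacobian bounded uniformly on a neighborhood of the origin, so
\[
Q=\tfrac1n\mathbf 1\mathbf 1^\top+\mathcal J(S)+O(\|S\|_F^2),
\]
and $\|S\|_F\le\|X\|_F\|Y\|_F\le\|W\|_F^2$. Substituting into Step 1, set $B=M-\tfrac1n D_c\mathbf 1\mathbf 1^\top$. The gradient then equals $-(BY,\;B^\top X)$ plus terms of the form $D_c\,\mathcal J(S)Y$ and their analogues, which are $O(\|W\|_F^3)$. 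Hence we define
\[
\mathcal L_M(X,Y)=(X+\eta BY,\;Y+\eta B^\top X).
\]
This is linear in $W$ and determined only by $M$, since $D_c$ and $\mathbf 1$ are functions of $M$. The remainder satisfies $\|R^{(s)}\|_F\le C_\eta\|W^{(s)}\|_F^2$ as long as $\|W^{(s)}\|_F\le 1$, because the cubic bound is dominated by the quadratic one in that regime.

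\textbf{Step 3 (constants and the role of (A1)).} The constant $C_\eta$ must be controlled. It depends on $\eta\,\|D_c\|_{\rm op}$ and on the second derivative of softmax, which is uniformly bounded (softmax is $1$-Lipschitz with bounded Hessian). Note that $\|D_c\|_{\rm op}\le 2T_wL$ times the maximum stationary mass, which is bounded under a normalization of $\eta$ by $\|M\|_{\rm op}$. Condition (A1), together with Proposition~\ref{prop:deepwalk_spectral}, is used to read $\mathcal L_M$ as acting on the leading invariant subspace $U_K$. Specifically, $B$ differs from $M$ only by the rank-one centering term $\tfrac1n D_c\mathbf 1\mathbf 1^\top$, so its dominant spectral subspace is that of $M$. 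This shows that the linear part is genuinely governed by the random-walk spectrum, while the $O(\|W\|_F^2)$ bound itself does not depend on the gap.

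\textbf{Main obstacle.} The expansion is only valid on the ball $\|W^{(s)}\|_F\le 1$, but (A2) controls only $W^{(0)}$. I would first present the result as a per-step statement, valid whenever the iterate lies in that ball, which is all the theorem literally asserts. I would then remark that along the linear flow $\|W^{(s)}\|_F\lesssim(1+\eta\|B\|_{\rm op})^s\|W^{(0)}\|_F$. Consequently, for $s$ up to order $\log(1/\|W^{(0)}\|_F)$ the iterates remain $o(1)$, and the bound holds uniformly over that early-phase horizon.
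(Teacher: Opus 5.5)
Your proposal follows the same route as the paper's proof. You expand the row-softmax around $W=0$ using (A2), substitute into the gradient, and absorb the identity part of the gradient step into $\mathcal L_M$. Yours is in fact more explicit: you identify $\mathcal L_M(X,Y)=(X+\eta BY,\,Y+\eta B^\top X)$, observe that the remainder is really $O(\|W\|_F^3)$, and state openly the replacement of $C$ by $M$ that the paper makes silently.

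One side remark in Step 3 is unnecessary and false. Because $D_c\mathbf 1=M\mathbf 1$, you have $B=M(I-\tfrac1n\mathbf 1\mathbf 1^\top)$, which for a regular graph annihilates the leading eigenvector $\mathbf 1$ of $M$. So the rank-one centering does not preserve the dominant subspace. Drop that claim; the theorem never uses it.
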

Therefore, when the initialization is sufficiently small, the nonlinear optimization process of DeepWalk can be approximated by a linear dynamical system driven by the population co-occurrence operator.
This result establishes the connection between the DeepWalk optimization process and the population spectral structure.
Specifically, the training dynamics of DeepWalk are mainly determined by the spectral properties of the population DeepWalk operator in the local region, and therefore the final learned embeddings are able to reflect the underlying network structure.

Furthermore, since practical computation uses the empirical co-occurrence matrix $C$ rather than the population operator $M$, it is necessary to characterize the embedding deviation caused by matrix perturbations.

Let $\widehat U_K$ denote the leading $K$-dimensional spectral subspace of the empirical DeepWalk operator, and let $U_K$ denote the leading spectral subspace of the population DeepWalk operator.
According to spectral perturbation theory, the deviation between the empirical spectral subspace and the population spectral subspace is determined jointly by the magnitude of empirical operator perturbation and the spectral gap.
The following theorem formalizes this perturbation relationship and establishes the concentration rate of the empirical DeepWalk embedding space around its population counterpart.

\begin{theorem}[DeepWalk Embedding Concentration]
\label{thm:embedding_concentration}

Under conditions \textnormal{(A1-A2)} in Assumption~\ref{ass:overall}, if the empirical co-occurrence matrix satisfies the corresponding concentration condition, then the empirical DeepWalk embedding space satisfies

\[
\left\|
\sin
\Theta
(
\widehat U_K,U_K
)
\right\|
=
O_P
\left(
\frac{\|E\|}{\gamma_K}
\right).
\]

In particular, when the empirical co-occurrence matrix perturbation satisfies
$\|E\|
=
O_P
\left(
\sqrt{\frac{\log n}{n\rho}}
\right),$
we have

\[
\left\|
\sin
\Theta
(
\widehat U_K,U_K
)
\right\|
=
O_P
\left(
\frac{1}{\gamma_K}
\sqrt{\frac{\log n}{n\rho}}
\right).
\]

\end{theorem}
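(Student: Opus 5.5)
The plan is to reduce the statement to a Davis--Kahan $\sin\Theta$ bound for the symmetrized DeepWalk operator, with Proposition~\ref{prop:deepwalk_spectral} and Theorem~\ref{thm:linearized_deepwalk} supplying the link between the learned embedding and the leading spectral subspace. Because $M$ is not symmetric in general, I would work with the similarity-transformed operator
\[
\widetilde M = D_A^{1/2}\Pi^{-1}M D_A^{-1/2} = 2\sum_{t=1}^{T_w}(L-t)U\Lambda^tU^\top .
\]
This operator is symmetric and has eigenvectors $u_1,\ldots,u_n$. Its eigenvalues are $f(\lambda_i)$ with $f(x)=2\sum_{t}(L-t)x^t$. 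By Proposition~\ref{prop:deepwalk_spectral}, its leading $K$-dimensional eigenspace is exactly $U_K$.

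The empirical counterpart is obtained by applying the same normalization to $C=M+E$. It is symmetric because $C$ is built symmetrically. The perturbation picks up degree-ratio factors from the normalization, so I will absorb these constants into $\|E\|$. Doing so requires bounded degree heterogeneity, which I would state as part of the concentration condition.

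The key steps run in the following order.

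\textbf{Step 1: identify $\widehat U_K$.} By Theorem~\ref{thm:linearized_deepwalk} and (A2), the remainder $R^{(s)}$ is $O(\|W^{(s)}\|_F^2)$. The iterates therefore follow the linear dynamics $\mathcal L_C$ up to higher-order terms, and these dynamics amplify the top eigendirections of the empirical operator. It follows that the learned embedding column space equals $\widehat U_K$ up to an $o(1)$ error, which I would absorb into the $O_P$ statement.

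\textbf{Step 2: translate the gap.} The gap $\gamma_K$ in (A1) lives on $\widetilde P$ and must be carried over to $\widetilde M$. For $\lambda\ge0$, $f$ is increasing with $f'(x)\ge 2(L-1)$. This gives
\[
f(\lambda_K)-f(\lambda_{K+1})\ge c_L\gamma_K .
\]
This holds provided the leading spectrum is nonnegative, or else after taking $\lambda_{K+1}$ to dominate the negative eigenvalues in modulus, an assumption I would make explicit. Because $L$ and $T_w$ are fixed, the constant $c_L$ is harmless.

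\textbf{Step 3: apply Davis--Kahan.} Using the Yu--Wang--Samworth variant,
\[
\|\sin\Theta(\widehat U_K,U_K)\|\le \frac{2\|E\|}{c_L\gamma_K},
\]
which gives the first display. Substituting the assumed rate $\|E\|=O_P(\sqrt{\log n/(n\rho)})$ then yields the second display.

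I expect Step 2 to be the main obstacle. Monotonicity of $f$ can fail when negative eigenvalues of large modulus exist, since $x^t$ with odd $t$ reorders the spectrum. In that case, the leading eigenspace of $M$ need not be $U_K$ without an extra condition, so this condition has to be stated.

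Step 1 is also delicate, because the linearization holds only near zero while the iterates grow along the top directions. My first attempt there is a standard power-iteration argument: run until the iterates reach a small but fixed scale, then bound the accumulated quadratic remainders geometrically.
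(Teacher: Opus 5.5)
Your Steps 2--3 are the paper's argument, carried out more carefully: Davis--Kahan applied to $C=M+E$, with the population leading subspace identified as $\mathrm{span}(u_1,\dots,u_K)$. The paper's proof simply invokes Proposition~\ref{prop:deepwalk_spectral} and writes $\|E\|/\gamma_K$, treating $\gamma_K$ as if it were the eigengap of the DeepWalk operator itself.

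Your similarity transform $\widetilde M=\mathrm{vol}(G)\,D_A^{-1/2}MD_A^{-1/2}=2\sum_t(L-t)\widetilde P^{\,t}$ is what actually makes the identification with the $u_i$ correct. In fact $M$ is already symmetric, since by reversibility $\Pi P^t=\mathrm{vol}(G)^{-1}A(D_A^{-1}A)^{t-1}$. So the premise ``$M$ is not symmetric'' is wrong, but the transform is still the right move: $M$ is only congruent to $\widetilde M$, so its eigenvectors are in general not the $u_i$ unless the graph is regular. Your bound $f(\lambda_K)-f(\lambda_{K+1})\ge 2(L-1)\gamma_K$ supplies the gap the paper takes for granted.

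The Step 2 obstacle you anticipate is milder than you fear. With $L>T_w$ the weights $L-t$ decrease in $t$, so the alternating-series bound gives $-2(L-1)|x|\le f(x)\le 0$ for $x\in[-1,0]$. Negative eigenvalues therefore never overtake nonnegative ones, and the only extra condition you need is $\lambda_{K+1}\ge 0$.

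Two points need correcting.

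\emph{Normalization factor.} The factor introduced by the normalization is not a bounded degree ratio. The transformed perturbation $\mathrm{vol}(G)\,D_A^{-1/2}ED_A^{-1/2}$ has norm between $(\mathrm{vol}(G)/d_{\max})\|E\|$ and $(\mathrm{vol}(G)/d_{\min})\|E\|$. Under bounded degree heterogeneity this is of order $n\|E\|$, and on a regular graph it is exactly $n\|E\|$. Because the $\sin\Theta$ bound is scale-invariant, this is not an artifact of your route. Comparing the raw $\|E\|$ with the order-one gap $\gamma_K$ is simply off by this factor, and the paper's one-line proof carries the same mismatch silently. The correct repair is to impose the concentration condition, and the rate $\sqrt{\log n/(n\rho)}$, on the normalized perturbation $\mathrm{vol}(G)\,D_A^{-1/2}ED_A^{-1/2}$. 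The statement's vague phrase ``the corresponding concentration condition'' leaves room for this; invoking bounded degree heterogeneity does not do it.

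\emph{Step 1.} Drop it. The statement defines $\widehat U_K$ as the leading spectral subspace of the empirical operator, and the paper's proof never uses Theorem~\ref{thm:linearized_deepwalk}. As written, Step 1 would also fail, for three reasons:
\begin{enumerate}
\item Near $W=0$ the softmax gradient is driven by the row-centered raw matrix $C-\frac1n(C\mathbf 1)\mathbf 1^\top$, not by the degree-normalized operator of Steps 2--3. Power iteration therefore aligns with a different subspace; already on a regular graph the centering annihilates $u_1$.
\item An additive $o(1)$ error cannot be absorbed into $O_P$ of a vanishing rate.
\item The small-norm phase of the iteration says nothing about the trained embedding.
\end{enumerate}
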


This theorem shows that the error of DeepWalk embeddings is mainly determined by the random perturbation of the empirical co-occurrence operator and the sensitivity of the population spectral structure to such perturbations.
Specifically,
$\sqrt{\frac{\log n}{n\rho}}$
characterizes the empirical operator perturbation caused by finite network size and random-walk sampling, while
$\frac{1}{\gamma_K}$
characterizes the amplification effect when such perturbations propagate to the leading embedding subspace.

More specifically, the spectral gap
$\gamma_K=\lambda_K-\lambda_{K+1}$
measures the separation between the $K$-th dominant spectral direction and the remaining spectral components.
When $\gamma_K$ is large, the dominant spectral subspace is clearly separated from the non-dominant spectral subspace.
Therefore, even when the empirical co-occurrence operator is subject to certain random perturbations, its leading $K$-dimensional spectral space remains relatively stable.
In contrast, when $\gamma_K$ is small, two adjacent spectral subspaces become difficult to distinguish, and even small operator perturbations may cause substantial deviations of the eigenspace.

Therefore, a larger spectral gap implies more stable DeepWalk representations, whereas a smaller spectral gap amplifies the errors caused by random-walk sampling.
As the effective network size $n\rho$ increases, random-walk sampling becomes more sufficient, or the spectral gap becomes larger, empirical DeepWalk representations can more accurately recover the population network structure.

This result provides the theoretical foundation for analyzing the stochastic fluctuations of the DeepWalk-CUSUM statistic in the next section.

\subsection{Fluctuation Bound of the DeepWalk-CUSUM Process under the Null Hypothesis}

The previous section established the concentration property of DeepWalk embedding representations, showing that empirical embeddings can recover the population network structure within a certain error range.
This section further investigates how stochastic perturbations in DeepWalk representations affect the CUSUM scan statistic when no structural change exists.

Consider the null hypothesis
$H_0:\mu_1=\mu_2=\cdots=\mu_T .$
Under this hypothesis, graph-level embeddings at different time points share the same population mean.
Therefore, after appropriate rotational alignment, the graph-level representations can be written as
$Z_t=\mu+\varepsilon_t,$
where $\mu$ denotes the population embedding mean during the stationary period, and $\varepsilon_t$ represents stochastic perturbations caused by finite network size, random-walk sampling, and the DeepWalk optimization procedure.

The DeepWalk-CUSUM statistic is defined as

\[
S_T(k)
=
\sqrt{\frac{k(T-k)}{T}}
\left(
\frac{1}{k}\sum_{t=1}^{k}Z_t
-
\frac{1}{T-k}\sum_{t=k+1}^{T}Z_t
\right),
\]
where $k$ denotes a candidate change-point location.
Under the null hypothesis, since the two time intervals have the same mean component,
$\frac{1}{k}\sum_{t=1}^{k}\mu
-
\frac{1}{T-k}\sum_{t=k+1}^{T}\mu
=0.$
Therefore, the deterministic structural change component vanishes in the CUSUM statistic, and the statistic is solely determined by random embedding errors:

\[
S_T(k)
=
\sqrt{\frac{k(T-k)}{T}}
\left(
\frac{1}{k}\sum_{t=1}^{k}\varepsilon_t
-
\frac{1}{T-k}\sum_{t=k+1}^{T}\varepsilon_t
\right).
\]

It can be seen that, in the absence of change points, the fluctuation of DeepWalk-CUSUM is entirely caused by stochastic errors in the representation learning process.
Define the maximum scan statistic as
$W_T
=
\max_{\Delta_T\leq k\leq T-\Delta_T}
\|S_T(k)\|,$
where $\Delta_T$ is introduced to prevent candidate change points from being too close to the sequence boundaries.

Since the embedding errors satisfy condition \textnormal{(A3)} in Assumption~\ref{ass:overall}, namely the uniform directional sub-Gaussian condition, vector concentration inequalities and maximal deviation bounds over candidate locations can be applied to control the entire CUSUM scanning process.
It should be noted that condition \textnormal{(A3)} is not directly implied by the previous spectral subspace concentration results.
Instead, it provides an additional probabilistic characterization of the tail behavior of graph-level embedding errors.
Specifically, $\varepsilon_t$ jointly incorporates random errors arising from finite network size, random-walk sampling, empirical co-occurrence matrix perturbations, DeepWalk optimization, and node representation aggregation.
Under appropriate regularity conditions, these errors generated by the accumulation of many local random fluctuations typically exhibit light-tailed behavior.
Therefore, the uniform directional sub-Gaussian condition provides a natural characterization.

This condition allows simultaneous control of fluctuations in arbitrary projection directions and further leads to a uniform upper bound for the multivariate CUSUM statistic.The following theorem formalizes this uniform control and establishes the stochastic order of the DeepWalk-CUSUM statistic under the null hypothesis.

\begin{theorem}[Fluctuation Bound of DeepWalk-CUSUM under the Null Hypothesis]
\label{thm:null_fluctuation}

Suppose conditions \textnormal{(A1-A3)} in Assumption~\ref{ass:overall} hold.
Then, under the null hypothesis $H_0$,

\[
W_T
=
O_P\left(
\frac{1}{\gamma_K}
\sqrt{\frac{\log n}{n\rho}}
\sqrt{d+\log T}
\right).
\]

In particular, when the embedding dimension $d$ is fixed,

\[
W_T
=
O_P\left(
\frac{1}{\gamma_K}
\sqrt{\frac{\log n\log T}{n\rho}}
\right).
\]

\end{theorem}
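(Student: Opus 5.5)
The argument splits into two parts. The first is a deterministic chaining/maximal-inequality argument for the CUSUM process, which gives $W_T=O_P(\sigma_\varepsilon\sqrt{d+\log T})$. The second is an identification of the sub-Gaussian scale $\sigma_\varepsilon$ with the embedding perturbation rate of Theorem~\ref{thm:embedding_concentration}, namely $\sigma_\varepsilon\lesssim\gamma_K^{-1}\sqrt{\log n/(n\rho)}$. Multiplying the two gives the display, and fixing $d$ gives the second form, since $d+\log T\lesssim\log T$ once $T\ge 3$.

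\textbf{Step 1 (pointwise control).} Under $H_0$ we have $S_T(k)=\sum_{t=1}^T a_{k,t}\varepsilon_t$. The weights are
$$a_{k,t}=\sqrt{(T-k)/(kT)}\ \text{ for } t\le k,\qquad a_{k,t}=-\sqrt{k/((T-k)T)}\ \text{ for } t>k,$$
and a direct computation gives $\sum_t a_{k,t}^2=1$.

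Fix $u\in\mathbb S^{d-1}$. By independence and (A3), $\langle u,S_T(k)\rangle$ is sub-Gaussian with parameter $\sigma_\varepsilon$. Hence
$$\mathbb P\bigl(|\langle u,S_T(k)\rangle|>x\bigr)\le 2e^{-x^2/(2\sigma_\varepsilon^2)}.$$

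\textbf{Step 2 (norm via a net).} Take a $1/2$-net $\mathcal N$ of $\mathbb S^{d-1}$ with $|\mathcal N|\le 5^d$. Then $\|S_T(k)\|\le 2\max_{u\in\mathcal N}\langle u,S_T(k)\rangle$.

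Next take a union bound over $\mathcal N$ and over the at most $T$ candidate locations $\Delta_T\le k\le T-\Delta_T$. This gives
$$\mathbb P\bigl(W_T>2x\bigr)\le 2T\,5^d e^{-x^2/(2\sigma_\varepsilon^2)}.$$
Choosing $x=C\sigma_\varepsilon\sqrt{d+\log T}$ with $C$ large makes the right-hand side arbitrarily small. Therefore $W_T=O_P(\sigma_\varepsilon\sqrt{d+\log T})$.

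The crude union bound over all $k$ is enough here, since only $\log T$ is claimed and no $\log\log T$ refinement is needed.

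\textbf{Step 3 (identifying $\sigma_\varepsilon$).} This is the main obstacle. Assumption (A3) posits an abstract constant $\sigma_\varepsilon$, while the theorem states the rate with the explicit factor $\gamma_K^{-1}\sqrt{\log n/(n\rho)}$. I would make the link through the aligned mean-pooled embedding:
$$Z_t-\mu=\tfrac1n\mathbf 1^\top\bigl(\widetilde X_t-X^{\rm pop}\bigr)+(\text{centering}).$$
By Cauchy--Schwarz,
$$\|Z_t-\mu\|\le n^{-1/2}\|\widetilde X_t-X^{\rm pop}\|_F.$$

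Theorem~\ref{thm:linearized_deepwalk} together with (A2) lets me treat $X_t$ as a basis of $\widehat U_K$, up to a negligible $O(\|W^{(0)}\|_F^2)$ remainder. Procrustes alignment then converts the subspace distance into a Frobenius distance, via the standard bound
$$\min_{Q}\|\widehat U_KQ-U_K\|_F\le\sqrt{2K}\,\|\sin\Theta\|_{\rm op}.$$
Combining this with Theorem~\ref{thm:embedding_concentration} shows that $\varepsilon_t$ has magnitude $O_P\bigl(\gamma_K^{-1}\sqrt{\log n/(n\rho)}\bigr)$ in every direction. Since the row normalization of $U_K$ makes the per-node entries $O(n^{-1/2})$, the $n^{-1/2}$ factor is absorbed.

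I would then read (A3) as holding with $\sigma_\varepsilon$ of this order, i.e. as the sub-Gaussian tail companion of that $O_P$ bound, and I would state explicitly that this calibration is part of the hypothesis. Rigorously upgrading an $O_P$ bound to a sub-Gaussian moment-generating-function bound would require a concentration inequality for $E$, which the paper only posits. Substituting this $\sigma_\varepsilon$ into Step 2 completes the proof.
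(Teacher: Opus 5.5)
Your proposal follows the paper's proof essentially step for step: write $S_T(k)$ as a linear combination of the independent $\varepsilon_t$, use the directional sub-Gaussianity in (A3), pass to the norm through a $1/2$-net with at most $5^d$ points, take a union bound over $k$ with $x\asymp\sigma_\varepsilon\sqrt{d+\log T}$, and then substitute $\sigma_\varepsilon=O\bigl(\gamma_K^{-1}\sqrt{\log n/(n\rho)}\bigr)$. Your explicit weight computation $\sum_t a_{k,t}^2=1$ only sharpens the paper's ``variance proxy proportional to $\sigma_\varepsilon^2$''. Your caveat in Step 3 is well placed: the paper makes that final substitution simply by citing the embedding-concentration result (Theorem~\ref{thm:embedding_concentration}), even though, as the paper itself remarks just before the theorem, that subspace bound does not imply (A3), let alone a sub-Gaussian parameter of that order, so stating the calibration of $\sigma_\varepsilon$ as an explicit hypothesis is the honest form of the same argument.
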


The theorem characterizes the maximum stochastic fluctuation range of the DeepWalk-CUSUM statistic over all candidate change-point locations under the absence of structural changes.
The different factors in the obtained upper bound have clear statistical interpretations.

First, the term
$\sqrt{\frac{\log n}{n\rho}}$
originates from the concentration error of the empirical DeepWalk co-occurrence operator relative to the population co-occurrence operator.
Here, $n\rho$ can be interpreted as the effective connectivity scale of the network.
A larger number of nodes or a denser network provides more sufficient structural information for random walks, making empirical co-occurrence relationships closer to their population expectations.
The factor $\log n$ arises from the logarithmic complexity required to uniformly control high-dimensional network operators.

Second, the factor
$\frac{1}{\gamma_K}$
comes from spectral subspace perturbation.
The random error of the empirical co-occurrence operator propagates into the DeepWalk embedding space through spectral decomposition, and a smaller spectral gap leads to a stronger amplification effect of this perturbation.

Finally,
$\sqrt{d+\log T}$
reflects the complexity introduced by both the embedding dimension and temporal scanning.
Specifically, $\sqrt d$ represents the dimensional cost required to control the norm of a $d$-dimensional random vector, whereas $\sqrt{\log T}$ arises from taking the maximum over all candidate locations.
Since DeepWalk-CUSUM compares approximately $T-1$ candidate change-point locations, uniformly controlling these random statistics introduces an additional logarithmic scanning cost.
Therefore, when $d$ is fixed, the temporal scanning component mainly appears as the factor $\sqrt{\log T}$.

Different from classical CUSUM methods, the stochastic errors considered in this paper do not directly originate from the original network observations.
Instead, they are jointly induced by random-walk sampling, DeepWalk representation learning, and graph-level aggregation.
Therefore, this result establishes a theoretical connection among network structures, representation stability, and downstream change-point detection statistics.

Furthermore, in the presence of structural changes, the CUSUM statistic can reliably distinguish true structural transitions from representation noise only when the change signal strength exceeds the above stochastic fluctuation scale.
The next section establishes the detection boundary and localization consistency results for DeepWalk-CUSUM based on this observation.

\subsection{Detection Boundary and Localization Consistency}

The previous section analyzed the stochastic fluctuation range of the DeepWalk-CUSUM statistic under the null hypothesis.
This section further studies the behavior of the statistic under structural changes and establishes the consistency of change-point detection together with the localization error bound.

Under the alternative hypothesis, suppose that there exists a unique change point $k^*$ such that the graph-level embedding mean satisfies the piecewise structure

\[
\mu_t=
\begin{cases}
\mu^{(1)}, & t\leq k^*,\\
\mu^{(2)}, & t>k^* .
\end{cases}
\]

Define the change magnitude in the embedding space as
$\Delta=
\|\mu^{(1)}-\mu^{(2)}\|.$
Then the graph-level embedding representation can be written as
$Z_t=\mu_t+\varepsilon_t,$
where $\mu_t$ represents the deterministic signal induced by network structural changes, and $\varepsilon_t$ represents stochastic perturbations generated during the DeepWalk representation process.

The CUSUM statistic can be decomposed into the mean change component and the random error component:
$S_T(k)
=
S_T^{(\mu)}(k)
+
S_T^{(\varepsilon)}(k),$
where

\[
S_T^{(\mu)}(k)
=
\sqrt{\frac{k(T-k)}{T}}
\left(
\frac{1}{k}\sum_{t=1}^{k}\mu_t
-
\frac{1}{T-k}\sum_{t=k+1}^{T}\mu_t
\right),
\]
represents the deterministic signal generated by structural changes,
and

\[
S_T^{(\varepsilon)}(k)
=
\sqrt{\frac{k(T-k)}{T}}
\left(
\frac{1}{k}\sum_{t=1}^{k}\varepsilon_t
-
\frac{1}{T-k}\sum_{t=k+1}^{T}\varepsilon_t
\right),
\]
represents the stochastic fluctuation caused by DeepWalk representation errors.

We first analyze the properties of the deterministic signal component.The following lemma characterizes the shape of the deterministic CUSUM signal and quantifies its decay away from the true change point.

\begin{lemma}[Deterministic Signal Structure of the DeepWalk-CUSUM Process]
\label{lem:signal_structure}

Assume condition \textnormal{(A4)} in Assumption~\ref{ass:overall} holds.
Then
$\|S_T^{(\mu)}(k)\|$
achieves its unique maximum at
$k=k^*$, and
$\|S_T^{(\mu)}(k^*)\|
=
\sqrt{
\frac{k^*(T-k^*)}{T}
}
\Delta.$
Furthermore, there exists a constant
$c_0>0$
depending only on $\tau_0$, such that for all
$1\leq k<T$,

\[
\|S_T^{(\mu)}(k^*)\|
-
\|S_T^{(\mu)}(k)\|
\geq
c_0
\Delta
\frac{|k-k^*|}{\sqrt{T}}.
\]

\end{lemma}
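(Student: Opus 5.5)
The plan is to compute $S_T^{(\mu)}(k)$ in closed form on each side of $k^*$, reduce $\|S_T^{(\mu)}(k)\|$ to a one-dimensional monotone function, and then bound the gap to $k^*$ with the mean value theorem. Under (A4) the mean sequence takes only the two values $\mu^{(1)},\mu^{(2)}$, so every CUSUM contrast is a scalar multiple of $\mu^{(1)}-\mu^{(2)}$. The whole problem is therefore one-dimensional.

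\textbf{Closed form.} First take $1\le k\le k^*$. The left average equals $\mu^{(1)}$. The right average equals
\[
\frac{(k^*-k)\mu^{(1)}+(T-k^*)\mu^{(2)}}{T-k}.
\]
Their difference is $\frac{T-k^*}{T-k}(\mu^{(1)}-\mu^{(2)})$. Hence
\[
\|S_T^{(\mu)}(k)\| = \Delta\,\frac{T-k^*}{\sqrt T}\,h(k), \qquad h(x)=\sqrt{\frac{x}{T-x}}.
\]
By the symmetric computation, for $k^*\le k<T$,
\[
\|S_T^{(\mu)}(k)\| = \Delta\,\frac{k^*}{\sqrt T}\,h(T-k).
\]
Setting $k=k^*$ in either formula gives $\sqrt{k^*(T-k^*)/T}\,\Delta$, which is the claimed peak value.

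\textbf{Uniqueness of the maximum.} The function $h$ is strictly increasing on $(0,T)$. So the first expression is strictly increasing in $k$ on $[1,k^*]$, and the second is strictly decreasing on $[k^*,T)$. Because $\Delta>0$, this gives the unique maximum at $k^*$.

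\textbf{Quantitative decay.} Differentiating,
\[
h'(x)=\frac{T}{2\sqrt{x}\,(T-x)^{3/2}}.
\]
For $x\in(0,T)$ we have $\sqrt{x}\le\sqrt T$ and $(T-x)^{3/2}\le T^{3/2}$, so $h'(x)\ge 1/(2T)$. The mean value theorem then gives, for $k\le k^*$,
\[
h(k^*)-h(k)\ge \frac{k^*-k}{2T}.
\]
Consequently
\[
\|S_T^{(\mu)}(k^*)\|-\|S_T^{(\mu)}(k)\| \ge \Delta\,\frac{T-k^*}{\sqrt T}\cdot\frac{k^*-k}{2T} \ge \frac{\tau_0}{2}\,\Delta\,\frac{k^*-k}{\sqrt T},
\]
where the last step uses $T-k^*\ge\tau_0T$ from (A4).

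For $k\ge k^*$, apply the same argument to $h(T-\cdot)$, using $k^*\ge\tau_0T$. This gives the same bound with $|k-k^*|$, so $c_0=\tau_0/2$ depends only on $\tau_0$.

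\textbf{Main obstacle.} The only point needing care is obtaining a linear-in-$|k-k^*|$ lower bound uniformly up to the boundaries $k=1$ and $k=T-1$, where $h$ behaves singularly. This works because the crude bound $h'\ge 1/(2T)$ holds on all of $(0,T)$, including near the endpoints where $h'$ is in fact larger. The interior condition is used only to lower-bound the prefactors $T-k^*$ and $k^*$ by $\tau_0T$.
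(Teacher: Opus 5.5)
Your proof is correct and follows the same route as the paper's proof: evaluate the deterministic CUSUM curve directly, then lower-bound its slope on each side of $k^*$ using the interior condition (A4). Your version is more careful than the paper's sketch. Your closed form $\Delta\frac{T-k^*}{\sqrt T}\sqrt{k/(T-k)}$ for $k\le k^*$ shows that the curve is not piecewise linear, contrary to the paper's assertion. It is your uniform bound $h'(x)\ge 1/(2T)$, combined with $\min(k^*,T-k^*)\ge\tau_0T$, that actually justifies both the paper's unproved slope claim and the uniqueness of the maximum, with the explicit constant $c_0=\tau_0/2$.
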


This lemma shows that structural changes generate a peak in the CUSUM curve centered at the true change point, and the deterministic signal becomes weaker as the candidate location moves farther away from the true change point.

On the other hand, according to the result in Section 3.3, the random error component satisfies
$\max_k
\|S_T^{(\varepsilon)}(k)\|
=
O_P(R_T).$
Therefore, change-point detection depends on comparing the deterministic signal strength with the stochastic fluctuation scale.

Define the DeepWalk-CUSUM change-point estimator as
$\widehat{k}
=
\arg\max_k
\|S_T(k)\|.$
The following theorem establishes the detection boundary.

\begin{theorem}[Detection Boundary of DeepWalk-CUSUM]
\label{thm:detection_boundary}

Suppose conditions \textnormal{(A1)}--\textnormal{(A4)}
in Assumption~\ref{ass:overall} hold.
Then the statistic
$W_T
=
\max_{1\leq k<T}
\|S_T(k)\|$
satisfies

\[
W_T
\geq
\sqrt{
\frac{k^*(T-k^*)}{T}
}
\Delta
-
O_P\left(
\frac{1}{\gamma_K}
\sqrt{
\frac{\log n}{n\rho}
}
\sqrt{
d+\log T
}
\right).
\]

Therefore, if
$\sqrt{
\frac{k^*(T-k^*)}{T}
}
\Delta
\gg
\frac{1}{\gamma_K}
\sqrt{
\frac{\log n}{n\rho}
}
\sqrt{
d+\log T
},$
then
$W_T
\overset{P}{\longrightarrow}
\infty.$
\end{theorem}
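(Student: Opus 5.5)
The argument evaluates the CUSUM process at the single location $k=k^*$, splits it into signal and noise, and applies the reverse triangle inequality. By definition of $W_T$ as a maximum over $1\le k<T$,
\[
W_T\ \ge\ \|S_T(k^*)\|\ \ge\ \|S_T^{(\mu)}(k^*)\|-\|S_T^{(\varepsilon)}(k^*)\|,
\]
using the decomposition $S_T(k)=S_T^{(\mu)}(k)+S_T^{(\varepsilon)}(k)$. Lemma~\ref{lem:signal_structure} gives the first term exactly:
\[
\|S_T^{(\mu)}(k^*)\|=\sqrt{k^*(T-k^*)/T}\,\Delta .
\]
It remains to show $\|S_T^{(\varepsilon)}(k^*)\|=O_P\bigl(\gamma_K^{-1}\sqrt{\log n/(n\rho)}\sqrt{d+\log T}\bigr)$. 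We bound the single term by the maximum and the maximum by the null fluctuation rate:
\[
\|S_T^{(\varepsilon)}(k^*)\|\le\max_k\|S_T^{(\varepsilon)}(k)\|.
\]
The process $S_T^{(\varepsilon)}$ depends only on the centered errors $\varepsilon_t=Z_t-\mu_t$, never on the means. So it has exactly the form of the null-hypothesis CUSUM process studied in Theorem~\ref{thm:null_fluctuation}.

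The main obstacle is checking that Theorem~\ref{thm:null_fluctuation} still applies under the alternative. Condition (A3) is phrased ``under the null,'' whereas here the $\varepsilon_t$ are centered around a piecewise-constant mean. I would argue that the proof of Theorem~\ref{thm:null_fluctuation} uses only independence, zero mean and the uniform directional sub-Gaussian bound of $\varepsilon_t$, none of which involve $\mu_t$. We therefore read (A3) as holding for $\varepsilon_t=Z_t-\mu_t$ on each segment, with the same constant. To be safe, the proof can redo the one-location bound directly, which also avoids any issue with the boundary trimming $\Delta_T$ used in the null theorem. The vector
\[
S_T^{(\varepsilon)}(k^*)=\sum_t a_t\varepsilon_t,\qquad \sum_t a_t^2=1,
\]
is a weighted sum of independent directional sub-Gaussian vectors, hence directional sub-Gaussian with parameter $\sigma_\varepsilon$. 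A $1/2$-net on $\mathbb S^{d-1}$ (of cardinality $5^d$) then gives $\|S_T^{(\varepsilon)}(k^*)\|=O_P(\sigma_\varepsilon\sqrt d)$, which is already within the stated $\sqrt{d+\log T}$ rate.

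To match the stated rate we then identify $\sigma_\varepsilon$ with the embedding-error scale $\gamma_K^{-1}\sqrt{\log n/(n\rho)}$. This is the same identification implicitly made in Theorem~\ref{thm:null_fluctuation}, coming from Theorem~\ref{thm:embedding_concentration} together with the bounded operator that takes subspace error to the aligned mean-pooled $Z_t$. I would state this dependence explicitly, since (A3) by itself only gives a constant $\sigma_\varepsilon$. Combining the signal term with the noise bound yields the displayed inequality.

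For the consequence, write $R_T=\gamma_K^{-1}\sqrt{\log n/(n\rho)}\sqrt{d+\log T}$ and $s_T=\sqrt{k^*(T-k^*)/T}\,\Delta$. Then $W_T\ge s_T-O_P(R_T)$, that is, $W_T\ge s_T-\xi_T R_T$ with $\xi_T$ tight. For any $M>0$,
\[
P(W_T\le M)\le P\bigl(\xi_T\ge (s_T-M)/R_T\bigr).
\]
Under $s_T\gg R_T$, if $R_T$ is bounded away from zero this gives $W_T\to\infty$ in probability, and in any case $W_T/s_T\overset{P}{\to}1$ from below. Under (A4), $s_T\ge\sqrt{\tau_0(1-\tau_0)T}\,\Delta$, so divergence holds whenever this lower bound diverges. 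I would note that the literal conclusion $W_T\overset{P}{\longrightarrow}\infty$ needs $s_T\to\infty$ in addition to dominating $R_T$ (for instance $R_T$ not vanishing, or $\sqrt T\Delta\to\infty$). I would add this as an implicit requirement rather than try to derive it from $s_T\gg R_T$ alone.
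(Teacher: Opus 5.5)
Your proposal follows the paper's proof essentially step for step: bound $W_T\ge\|S_T(k^*)\|$, split it into signal and noise with the reverse triangle inequality, read off $\|S_T^{(\mu)}(k^*)\|=\sqrt{k^*(T-k^*)/T}\,\Delta$ from Lemma~\ref{lem:signal_structure}, and control $\|S_T^{(\varepsilon)}(k^*)\|$ through Theorem~\ref{thm:null_fluctuation}. Your direct single-location net bound, using $\sum_t a_t^2=1$, is just a sharper version of that last step. Your caveats are correct and cover points the paper's proof passes over silently. The paper applies (A3) to $\varepsilon_t=Z_t-\mu_t$ under the alternative without comment, and it asserts $\sigma_\varepsilon=O\bigl(\gamma_K^{-1}\sqrt{\log n/(n\rho)}\bigr)$ without derivation. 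Its final step also deduces $W_T\overset{P}{\longrightarrow}\infty$ from $s_T\gg R_T$ alone, which fails when $R_T\to0$ and $s_T$ stays bounded, so the extra requirement $s_T\to\infty$ you add is genuinely needed.
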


This theorem shows that the detection capability of DeepWalk-CUSUM depends on the relative magnitude between the deterministic change signal and the stochastic fluctuation of representations.
The deterministic peak at the true change point has magnitude
$\sqrt{\frac{k^*(T-k^*)}{T}}\Delta,$
whereas the maximum stochastic fluctuation under the null hypothesis has magnitude
$\frac{1}{\gamma_K}
\sqrt{\frac{\log n}{n\rho}}
\sqrt{d+\log T}.$

Therefore, reliable separation between true structural changes and DeepWalk representation noise can be achieved only when the former asymptotically dominates the latter.

The detection boundary also reveals which types of dynamic networks are easier to analyze.
First, a larger change magnitude $\Delta$ creates a more significant mean shift in the embedding space and is therefore easier to detect.
Second, a larger spectral gap $\gamma_K$ indicates that the dominant spectral structure of the network is more stable, reducing the propagation of empirical operator perturbations into the embedding space.
Third, a larger effective network size $n\rho$ indicates that the network contains more sufficient connectivity information, thereby reducing the stochastic error of empirical DeepWalk representations.

Conversely, when the network is overly sparse, the dominant spectral space is not well separated from other spectral components, or the selected embedding dimension is large, the stochastic fluctuations of DeepWalk representations increase, requiring stronger structural changes for reliable detection.
In addition, although increasing the temporal length $T$ provides more network snapshots, it also increases the number of candidate change-point locations and therefore introduces an additional logarithmic scanning cost.

Furthermore, we can obtain the localization error bound for the estimated change point.

\begin{theorem}[Consistency of Change-point Localization]
\label{thm:localization}

Suppose conditions \textnormal{(A1)}--\textnormal{(A4)}
in Assumption~\ref{ass:overall} hold.
If

\[
\sqrt{T}\Delta
\gg
\frac{1}{\gamma_K}
\sqrt{
\frac{\log n}{n\rho}
}
\sqrt{
d+\log T
},
\]

then

\[
\frac{
|\widehat{k}-k^*|
}{T}
\overset{P}{\longrightarrow}
0.
\]

More precisely,

\[
|\widehat{k}-k^*|
=
O_P\left(
\frac{
\sqrt{T}
}{
\gamma_K\Delta
}
\sqrt{
\frac{\log n}{n\rho}
}
\sqrt{
d+\log T
}
\right).
\]

\end{theorem}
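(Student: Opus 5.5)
The plan is the standard argmax argument: decompose $S_T=S_T^{(\mu)}+S_T^{(\varepsilon)}$, use Lemma~\ref{lem:signal_structure} for a linear drop of the signal away from $k^*$, and use the null fluctuation bound to control the noise uniformly in $k$. Write $R_T=\gamma_K^{-1}\sqrt{\log n/(n\rho)}\sqrt{d+\log T}$. The first step is to observe that the bound of Theorem~\ref{thm:null_fluctuation} applies to $\max_k\|S_T^{(\varepsilon)}(k)\|$ under the alternative as well. Indeed, $S_T^{(\varepsilon)}$ involves only the centered errors $\varepsilon_t=Z_t-\mu_t$, and (A3) is a condition on these errors alone. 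I would state explicitly that (A3) is assumed for $\varepsilon_t=Z_t-\mu_t$ under (A4), so the sub-Gaussian maximal argument carries over verbatim. This gives $\max_{1\le k<T}\|S_T^{(\varepsilon)}(k)\|=O_P(R_T)$. If the null theorem was proved only over $\Delta_T\le k\le T-\Delta_T$, I would redo the union bound over all $k$; this costs only the $\sqrt{\log T}$ factor, which is already present.

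Next comes the basic inequality. Since $\widehat k$ maximizes $\|S_T(k)\|$, we have $\|S_T(\widehat k)\|\ge\|S_T(k^*)\|$. Applying the triangle inequality to both sides gives
\[
\|S_T^{(\mu)}(k^*)\|-\|S_T^{(\mu)}(\widehat k)\|\le \|S_T^{(\varepsilon)}(\widehat k)\|+\|S_T^{(\varepsilon)}(k^*)\|\le 2\max_{k}\|S_T^{(\varepsilon)}(k)\|.
\]
The left side is bounded below by $c_0\Delta|\widehat k-k^*|/\sqrt T$, using the decay inequality of Lemma~\ref{lem:signal_structure}, which holds for all $1\le k<T$ with $c_0$ depending only on $\tau_0$. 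Combining the two bounds yields
\[
|\widehat k-k^*|\le \frac{2\sqrt T}{c_0\Delta}\max_k\|S_T^{(\varepsilon)}(k)\|=O_P\!\left(\frac{\sqrt T}{\gamma_K\Delta}\sqrt{\frac{\log n}{n\rho}}\sqrt{d+\log T}\right),
\]
which is the precise rate. Dividing by $T$ gives $|\widehat k-k^*|/T=O_P(R_T/(\sqrt T\Delta))$. This tends to zero in probability exactly under the hypothesis $\sqrt T\Delta\gg R_T$.

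The main obstacle is the first step, namely justifying that the noise bound of Theorem~\ref{thm:null_fluctuation} holds under $H_1$ and uniformly over all $k$, including the boundary values. For the boundary, I would argue directly. For fixed $k$, $S_T^{(\varepsilon)}(k)$ is a linear combination $\sum_t a_{t}(k)\varepsilon_t$ with $\sum_t a_t(k)^2=1$. Under (A3), each projection $\langle u,S_T^{(\varepsilon)}(k)\rangle$ is therefore $\sigma_\varepsilon$-sub-Gaussian. An $\tfrac12$-net of $\mathbb S^{d-1}$ of size $5^d$, followed by a union bound over the $T-1$ values of $k$, gives $\max_k\|S_T^{(\varepsilon)}(k)\|=O_P(\sigma_\varepsilon\sqrt{d+\log T})$. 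The rate $R_T$ then follows by identifying $\sigma_\varepsilon\lesssim\gamma_K^{-1}\sqrt{\log n/(n\rho)}$ through Theorem~\ref{thm:embedding_concentration}, exactly as in the proof of Theorem~\ref{thm:null_fluctuation}.

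A secondary point is that Lemma~\ref{lem:signal_structure}'s decay inequality must hold with a constant independent of $T$ for all $k$, not only for interior $k$. I would take this as given from the lemma. If it were needed, it can be checked from the explicit piecewise form of $\|S_T^{(\mu)}(k)\|$, namely $\Delta\sqrt{k/(T(T-k))}\,(T-k^*)$ for $k\le k^*$ and symmetrically for $k>k^*$.
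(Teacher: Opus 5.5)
Your proposal is correct and is essentially the paper's proof: the argmax basic inequality $\|S_T^{(\mu)}(k^*)\|-\|S_T^{(\mu)}(\widehat k)\|\le 2\max_k\|S_T^{(\varepsilon)}(k)\|$, the linear decay from Lemma~\ref{lem:signal_structure}, and the uniform noise bound of Theorem~\ref{thm:null_fluctuation}. Your remarks that (A3) must be read as a condition on $\varepsilon_t=Z_t-\mu_t$ under (A4), and that the union bound must cover all $1\le k<T$, make explicit two points the paper uses without comment; the one step that neither you nor the paper actually derives is $\sigma_\varepsilon\lesssim\gamma_K^{-1}\sqrt{\log n/(n\rho)}$, because Theorem~\ref{thm:embedding_concentration} bounds a subspace angle rather than the sub-Gaussian parameter of the pooled, aligned embedding, so the stated rate is inherited from Theorem~\ref{thm:null_fluctuation} rather than established independently.
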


This theorem provides a stronger guarantee than detection consistency.
The detection boundary only indicates that the statistic can exceed the stochastic fluctuation level under the alternative hypothesis, thereby determining whether a structural change exists in the network sequence.
In contrast, localization consistency further requires that the relative error between the estimated change point $\widehat{k}$ and the true change point $k^*$ converges to zero, namely,
$\frac{|\widehat{k}-k^*|}{T}
\overset{P}{\longrightarrow}
0.$
The source of the precise localization error bound can be explained by the local shape of the deterministic CUSUM curve.
According to the deterministic signal lemma above, when the candidate location $k$ deviates from the true change point $k^*$, the deterministic CUSUM signal decreases by at least
$c_0\Delta
\frac{|k-k^*|}{\sqrt T}.$

On the other hand, the maximum perturbation of the random CUSUM process over all candidate locations is
$O_P(R_T),$
where
$R_T=
\frac{1}{\gamma_K}
\sqrt{\frac{\log n}{n\rho}}
\sqrt{d+\log T}.$
To ensure that the statistic at the true change point remains larger than that at deviating locations, the decrease of the deterministic signal must be comparable to the stochastic perturbation.
Therefore, by setting
$\Delta
\frac{|\widehat{k}-k^*|}{\sqrt T}
=
O_P(R_T),$
we obtain
$|\widehat{k}-k^*|
=
O_P
\left(
\frac{\sqrt T R_T}{\Delta}
\right),$
which gives the localization error bound in the theorem.

This result further indicates that localization accuracy improves with increasing change magnitude $\Delta$, spectral gap $\gamma_K$, and effective network size $n\rho$.
Higher embedding dimensions $d$ and a larger number of candidate time points increase stochastic fluctuations and therefore reduce localization accuracy.
It should be noted that the absolute localization error does not necessarily converge to zero as $T$ increases.
However, under the given signal condition, its ratio relative to the sequence length converges to zero.
Therefore, the obtained result corresponds to relative localization consistency.

In summary, this section establishes the theoretical connection between stochastic errors in DeepWalk representations and change-point detection performance.
Specifically, structural changes in the network first induce mean changes in the embedding space, which generate deterministic signals in the CUSUM statistic.
Meanwhile, finite network size and random-walk sampling introduce stochastic fluctuations in the representations.
When the structural change signal exceeds the stochastic fluctuation scale, DeepWalk-CUSUM achieves reliable detection and further obtains consistent change-point localization.

\section{Extension to Multiple Change Points}
\label{sec:multiple}

Real-world dynamic networks often contain multiple structural transitions.
Suppose the true change points satisfy
$1<k_1^*<\cdots<k_m^*<T$,
and let
$k_0^*=0$
and
$k_{m+1}^*=T$.
We assume that the embedding mean is piecewise constant, that is,
$\mu_t=\mu^{(j)}$
when
$k_{j-1}^*<t\leq k_j^*$,
and adjacent segments satisfy
$\mu^{(j)}\neq\mu^{(j+1)}$.

\subsection{Recursive Binary Segmentation}

For any candidate interval $[s,e]$, define the local CUSUM statistic as

\begin{align*}
S_{s,e}(k)={}&
\sqrt{\frac{(k-s+1)(e-k)}{e-s+1}}
\notag\\
&\times
\left(
\frac1{k-s+1}\sum_{t=s}^k Z_t
-
\frac1{e-k}\sum_{t=k+1}^e Z_t
\right).
\end{align*}

Here, $s\leq k<e$, and define
$W_{s,e}
=
\max_{s\leq k<e}
\|S_{s,e}(k)\|.$
If $W_{s,e}$ exceeds the interval-specific threshold
$\zeta_{s,e}$,
the interval is split at

\begin{equation*}
\widehat k
=
\arg\max_{s\leq k<e}
\|S_{s,e}(k)\|,
\end{equation*}
and the subintervals
$[s,\widehat k]$
and
$[\widehat k+1,e]$
are processed recursively.
Otherwise, the segmentation procedure stops.
The final estimated change-point set is
$\widehat{\mathcal K}
=
\{\widehat k_1,\ldots,\widehat k_{\widehat m}\}.$

This procedure recursively reduces the multiple change-point problem to a sequence of single change-point problems.
Therefore, the single change-point theory developed in Section~\ref{sec:theory} provides the theoretical foundation for this procedure.

\subsection{Threshold Selection and Implementation Details}

According to Theorem~\ref{thm:null_fluctuation}, for an interval $[s,e]$, the threshold can be chosen as

\begin{equation*}
\zeta_{s,e}
=
C
\frac1{\gamma_K}
\sqrt{\frac{\log n}{n\rho}}
\sqrt{d+\log(e-s+1)},
\end{equation*}
where $C>0$ is a tuning constant.
To avoid unstable estimation on excessively short intervals, segmentation is continued only when
$e-s+1\geq h$,
where $h$ is the minimum interval length.
A smaller value of $h$ increases sensitivity to short-lived changes, whereas a larger value of $h$ improves robustness against noise.

\subsection{Computational Complexity}

If each node generates $r$ random walks of length $L$, the computational cost of random-walk generation for all $T$ snapshots is approximately
$O(TnrL).$
After obtaining the graph-level embeddings, all CUSUM candidate values on an interval of length $m$ can be computed in
$O(md)$
time using cumulative sums.
The total complexity of the binary segmentation stage is approximately
$O(Td\log T).$
Therefore, the overall computational cost is mainly determined by the DeepWalk embedding stage.
The embeddings of different network snapshots can be computed independently, making the proposed method suitable for parallel implementation.

\begin{algorithm}[t]
\caption{Recursive Binary Segmentation}
\label{alg:binary-segmentation}
\KwIn{Graph-level sequence $\{Z_t\}_{t=1}^{T}$; initial interval $[s,e]$; interval threshold $\zeta_{a,b}$; minimum interval length $h$.}
\KwOut{Estimated change-point set $\widehat{\mathcal K}$.}
$\widehat{\mathcal K}\gets\emptyset$\;
\Fn{\Segment{$a,b$}}{
  \If{$b-a+1<h$}{
    \Return\;
  }
  Compute $S_{a,b}(k)$ for every $a\le k<b$\;
  $W_{a,b}\gets\max_{a\le k<b}\|S_{a,b}(k)\|$\;
  \If{$W_{a,b}>\zeta_{a,b}$}{
    $\widehat k\gets\arg\max_{a\le k<b}\|S_{a,b}(k)\|$\;
    $\widehat{\mathcal K}\gets\widehat{\mathcal K}\cup\{\widehat k\}$\;
    \Segment{$a,\widehat k$}\;
    \Segment{$\widehat k+1,b$}\;
  }
}
\Segment{$s,e$}\;
Sort $\widehat{\mathcal K}$ in ascending order\;
\Return $\widehat{\mathcal K}$\;
\end{algorithm}

\section{Simulation Studies}
\label{sec:simulation}
This section evaluates the proposed method through a sequence of experiments that separate detection, localization, robustness, and computational behavior. The primary experiments use density-controlled structural changes so that performance cannot be attributed to a change in the expected number of edges.

\subsection{Experimental Setup}

\textit{Data-generating mechanisms.}
Unless otherwise stated, snapshots are independent conditional on the regime, node identities are fixed over time, and a change replaces a three-community stochastic block model (SBM) by a five-community SBM. For arbitrary $n$, community sizes are balanced, meaning that they differ by at most one. If the block sizes are $n_1,\ldots,n_K$, the expected edge density is
\begin{equation}
	\bar p_K=
	\frac{
	\left\{\sum_{g=1}^{K}\binom{n_g}{2}\right\}p_{\rm in}
	+\left\{
	\binom{n}{2}-\sum_{g=1}^{K}\binom{n_g}{2}
	\right\}p_{\rm out}
	}{\binom{n}{2}}.
	\label{eq:density-matched-sbm}
\end{equation}
The post-change within-community probability is recomputed from~\eqref{eq:density-matched-sbm} for each network size, ensuring exact equality of the pre- and post-change expected densities. The main experiment has $n=150$, for which the balanced communities are equal sized.

\textit{Competing methods:}
We compare DeepWalk--CUSUM with adjacency spectral embedding (ASE)--CUSUM, normalized Laplacian spectral embedding (LSE)--CUSUM, Node2Vec--CUSUM based on biased random-walk embeddings~\citep{grover2016node2vec}, the random-walk-in-latent-space descriptor of Lin et al.~\citep{lin2023network}, Laplacian anomaly detection (LAD)~\citep{huang2020laplacian}, scalable change-point detection based on the density of states (SCPD)~\citep{huang2023fast}, and edge-count CUSUM. ASE, LSE, Node2Vec, and RW-LS use graph-level vector summaries followed by the same CUSUM rule as DeepWalk. LAD and SCPD retain their native spectrum-based window scores. Edge-count CUSUM is a density-only diagnostic.

\textit{Implementation details.}
All embedding methods use dimension $d=8$. DeepWalk uses ten walks of length 30 per node and context-window size $T_w=10$. Node2Vec uses the same walk budget with return parameter $p=1$ and in--out parameter $q=0.5$. Every independently fitted node embedding is aligned to the fixed reference $X_{\rm ref}=X_1$ by~\eqref{eq:procrustes} before mean pooling. LAD and SCPD use retrospective reference windows of five snapshots in the calibrated signal, change-location, and runtime experiments, and windows of five and ten snapshots in the main localization and multi-change comparisons. SCPD uses 20 density-of-states bins. Unless otherwise stated, methods are evaluated on paired graph sequences and use method-specific random-number streams derived from fixed replication seeds. The experiments were run in Python 3.12.3 with NumPy 1.26.4 and pandas 2.2.2 on a 16-logical-core Intel processor. Parallel Monte Carlo experiments use up to ten worker processes; the runtime benchmark instead uses one process and one BLAS thread.

For the primary thresholded experiments, a method-specific critical value is the higher empirical 95th percentile of maximum method-specific scan scores from an independent null sample. The calibrated signal-strength experiment uses $B_{\rm cal}=999$ null sequences and $R=200$ evaluation sequences per signal level. The theory-guided diagnostics use $B_{\rm cal}=499$ and $R=100$ per setting. Null calibration sequences are not reused as evaluation sequences.

\textit{Evaluation metrics.}
For an unthresholded localization diagnostic, every method returns the maximizer $\widehat k$ of its scan statistic, and we report
\begin{equation}
\begin{aligned}
{\rm MAE}
&=R^{-1}\sum_{r=1}^R|\widehat k_r-k^*|,\\
{\rm Within5}
&=R^{-1}\sum_{r=1}^R\mathbf 1\{|\widehat k_r-k^*|\le5\}.
\end{aligned}
\label{eq:simulation-metrics}
\end{equation}
These forced-localization measures do not by themselves establish detection. In calibrated experiments, empirical size and power are computed from thresholded decisions. Localization metrics under an alternative are then evaluated conditional on rejection. Multiple-change experiments additionally report the estimated change count and the Hausdorff distance between the estimated and true change-point sets. Runtime includes representation construction and the downstream scan unless stated otherwise.

\subsection{Density-Controlled Single-Change Detection}

The main design has $n=150$, $T=100$, and $k^*=50$. Before the change, each snapshot follows a three-community SBM; after the change, it follows a five-community SBM.
We set $(p_{\rm in}^{(0)},p_{\rm out}^{(0)})=(0.250,0.050)$ before the change and $(p_{\rm in}^{(1)},p_{\rm out}^{(1)})=(0.388,0.050)$ after the change. These parameters give $\bar p_3\approx\bar p_5\approx0.1158$. Consequently, the signal is a reorganization of edges across communities rather than a first-order density shift; an edge-count detector is included only as a density-only diagnostic.

\begin{table}[t]
	\centering
	\caption{Localization results under the density-matched single-change
		setting based on $R=100$ replications. Parentheses report Monte Carlo
		standard errors. Smaller MAE and median absolute error and larger
		Within 5 are preferred; the best value in each column is bold.}
	\label{tab:density-matched-single}
	\begingroup
	\footnotesize
	\setlength{\tabcolsep}{4.0pt}
	\renewcommand{\arraystretch}{1.08}
	\begin{tabular}{@{}lccc@{}}
		\toprule
		Method
		& MAE
		& \shortstack{Median\\AE}
		& \shortstack{Within\\5} \\
		\midrule
		DeepWalk--CUSUM
		& \textbf{1.12 (0.16)} & \textbf{1.0}  & \textbf{0.97 (0.02)} \\
		Node2Vec--CUSUM
		& 4.80 (0.88) & 1.5  & 0.74 (0.04) \\
		LSE--CUSUM
		& 7.85 (1.26) & 2.0  & 0.67 (0.05) \\
		ASE--CUSUM
		& 9.97 (1.40) & 3.5  & 0.59 (0.05) \\
		SCPD
		& 11.84 (1.46) & 3.0  & 0.54 (0.05) \\
		LAD
		& 18.84 (1.25) & 18.0 & 0.20 (0.04) \\
		RW-LS--CUSUM
		& 38.46 (1.16) & 42.5 & 0.03 (0.02) \\
		Edge-count CUSUM
		& 33.84 (1.40) & 37.0 & 0.05 (0.02) \\
		\bottomrule
	\end{tabular}
	\endgroup
\end{table}

Table~\ref{tab:density-matched-single} shows that DeepWalk--CUSUM attains the lowest MAE (1.12) and the largest Within-5 rate (0.97) among the low-dimensional and graph-level methods in this comparison. Node2Vec--CUSUM is the closest random-walk embedding comparator, although its MAE is higher and its Within-5 rate is lower under the same embedding dimension. RW-LS, LAD, and SCPD are less accurate for this density-matched community reallocation. The poor edge-count result is consistent with the intended density-controlled design.

Localization accuracy is not by itself evidence of a level-$\alpha$ test. We therefore applied the primary split-sample calibration protocol at nominal level $\alpha=0.05$. Table~\ref{tab:size-power} reports empirical size under the no-change endpoint and power under the full density-matched alternative. Localization is evaluated only among rejected alternative sequences.

\begin{table*}[t]
\centering
\caption{Calibrated size, power, and conditional localization at nominal level $\alpha=0.05$. Critical values use $B_{\rm cal}=999$ independent null sequences; empirical size and power use $R=200$ additional sequences each. Parentheses report Monte Carlo standard errors. Bold identifies the best result in each column. Critical values are not ranked.}
\label{tab:size-power}
\begin{tabular}{lccccc}
\toprule
Method & Critical value & Size & Power & Alt. MAE & Alt. Within 5 \\
\midrule
DeepWalk--CUSUM & 0.0111 & 0.030 (0.012) & \textbf{1.000 (0.000)} & \textbf{1.12} & \textbf{0.96} \\
Node2Vec--CUSUM & 0.0096 & 0.025 (0.011) & 0.830 (0.027) & 2.99 & 0.89 \\
ASE--CUSUM & 0.0240 & \textbf{0.050 (0.015)} & 0.620 (0.034) & 2.73 & 0.87 \\
LSE--CUSUM & 0.0026 & \textbf{0.050 (0.015)} & 0.750 (0.031) & 2.19 & 0.89 \\
SCPD & 0.0065 & 0.070 (0.018) & 0.150 (0.025) & 6.53 & 0.77 \\
LAD & 0.0008 & 0.025 (0.011) & 0.020 (0.010) & 36.00 & 0.00 \\
RW-LS--CUSUM & 0.0120 & 0.060 (0.017) & 0.025 (0.011) & 45.60 & 0.00 \\
Edge-count CUSUM & 101.5961 & 0.045 (0.015) & 0.030 (0.012) & 36.00 & 0.00 \\
\bottomrule
\end{tabular}
\end{table*}

Table~\ref{tab:size-power} shows that the formal calibration yields empirical sizes between 0.025 and 0.070 across the compared methods. DeepWalk--CUSUM has power 1.00 and conditional MAE 1.12 under the full structural alternative. Node2Vec--CUSUM is the closest random-walk embedding competitor, with power 0.83 and conditional MAE 2.99; LSE--CUSUM and ASE--CUSUM have powers 0.75 and 0.62. The low powers of the edge-count, RW-LS, and LAD statistics distinguish the tested community reallocation from a first-order density shift.

\subsection{Signal Strength and Theory-Guided Detection Behavior}

We first vary the magnitude of a density-preserving structural change. Let $Q_0$ and $Q_1$ denote the three- and five-community probability matrices in the main design. The pre-change matrix remains $Q_0$, while the post-change matrix is
\begin{equation}
Q_{\lambda}=(1-\lambda)Q_0+\lambda Q_1,
\qquad
\lambda\in\{0,0.25,0.50,0.75,1\}.
\label{eq:signal-interpolation}
\end{equation}
Because $Q_0$ and $Q_1$ have identical expected densities, every interpolated alternative is also density matched. The endpoint $\lambda=0$ is a no-change evaluation setting and $\lambda=1$ is the main structural alternative. Each method uses one critical value calibrated from 999 independent $Q_0$ null sequences, followed by 200 independent evaluation sequences at each value of $\lambda$.

\begin{figure*}[t]
\centering
\includegraphics[width=0.96\textwidth]{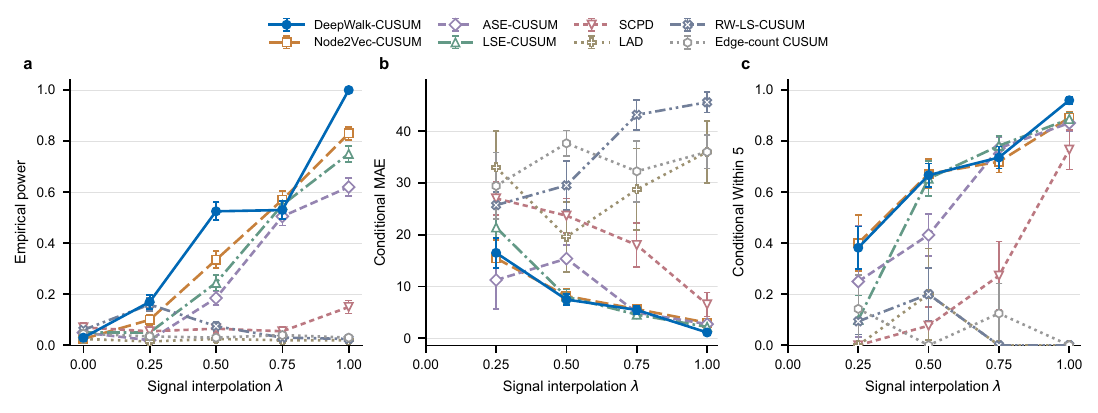}
\caption{Calibrated detection and localization across the density-preserving signal path in~\eqref{eq:signal-interpolation}. (a) Empirical rejection probability, equal to size at $\lambda=0$ and power for $\lambda>0$; (b) MAE conditional on rejection; and (c) conditional Within 5. Localization is undefined at $\lambda=0$ and is therefore omitted from panels (b)--(c). Error bars are Monte Carlo standard errors over $R=200$ evaluation sequences.}
\label{fig:signal-strength}
\end{figure*}

\begin{table*}[t]
\centering
\caption{Calibrated rejection probability across density-preserving signal levels. The $\lambda=0$ column is empirical size; the remaining columns are empirical power. Parentheses report Monte Carlo standard errors over $R=200$ evaluation sequences. Bold identifies the best result in each column.}
\label{tab:signal-strength}
\begingroup
\footnotesize
\setlength{\tabcolsep}{3.0pt}
\begin{tabular}{lccccc}
\toprule
Method & $\lambda=0$ & $\lambda=0.25$ & $\lambda=0.50$ & $\lambda=0.75$ & $\lambda=1$ \\
\midrule
DeepWalk--CUSUM & 0.03 (0.012) & \textbf{0.17 (0.027)} & \textbf{0.53 (0.035)} & 0.53 (0.035) & \textbf{1.00 (0.000)} \\
Node2Vec--CUSUM & 0.03 (0.011) & 0.10 (0.021) & 0.34 (0.033) & \textbf{0.57 (0.035)} & 0.83 (0.027) \\
ASE--CUSUM & \textbf{0.05 (0.015)} & 0.02 (0.010) & 0.18 (0.027) & 0.51 (0.035) & 0.62 (0.034) \\
LSE--CUSUM & \textbf{0.05 (0.015)} & 0.05 (0.015) & 0.24 (0.030) & 0.55 (0.035) & 0.75 (0.031) \\
SCPD & 0.07 (0.018) & 0.06 (0.016) & 0.07 (0.017) & 0.06 (0.016) & 0.15 (0.025) \\
LAD & 0.03 (0.011) & 0.01 (0.009) & 0.03 (0.011) & 0.02 (0.010) & 0.02 (0.010) \\
RW-LS--CUSUM & 0.06 (0.017) & 0.16 (0.026) & 0.07 (0.019) & 0.03 (0.012) & 0.03 (0.011) \\
Edge-count CUSUM & 0.04 (0.015) & 0.04 (0.013) & 0.03 (0.012) & 0.04 (0.014) & 0.03 (0.012) \\
\bottomrule
\end{tabular}
\endgroup
\end{table*}

Figure~\ref{fig:signal-strength} and Table~\ref{tab:signal-strength} separate detection from localization. DeepWalk--CUSUM increases from power 0.17 at $\lambda=0.25$ to 1.00 at $\lambda=1$; its conditional MAE decreases from 16.50 to 1.12 and its conditional Within-5 rate increases from 0.38 to 0.96. Node2Vec--CUSUM and LSE--CUSUM reach powers 0.83 and 0.75 at the full alternative. At $\lambda=0.75$, their powers are slightly above the DeepWalk value, with differences comparable to the reported Monte Carlo errors. RW-LS has some power only at the weakest nonzero signal, while the density-only statistic remains near its null rejection rate throughout. The calibrated experiment supports a detection-to-localization transition without treating a forced CUSUM maximizer as evidence of detection.

We next connect the simulations to the fluctuation scale in Theorem~\ref{thm:null_fluctuation}. Up to constants, define the diagnostic proxy
\begin{equation}
\widetilde R_T=
\frac{1}{\gamma_K}
\sqrt{\frac{\log n}{n\bar p}}
\sqrt{d+\log T},
\label{eq:theory-diagnostic-proxy}
\end{equation}
where $\bar p$ is expected edge density and $\gamma_K$ is the smaller pre- and post-change population gap. In the effective-degree experiment, both regime matrices are multiplied by a common factor to obtain expected degrees in $\{8,12,18,30\}$ while preserving their normalized block structure. In the gap experiment, each regime matrix is interpolated toward the constant-density matrix using multipliers $\{0.35,0.55,0.75,1\}$. This keeps density fixed but changes both the population gap and the structural contrast. We therefore interpret the latter as a difficulty path consistent with the theorem, not as a causal isolation of the gap. The three representation methods directly tied to the spectral argument, DeepWalk, ASE, and LSE, are evaluated with 499 independent calibration sequences and 100 alternatives per setting.

\begin{figure*}[t]
\centering
\includegraphics[width=0.96\textwidth]{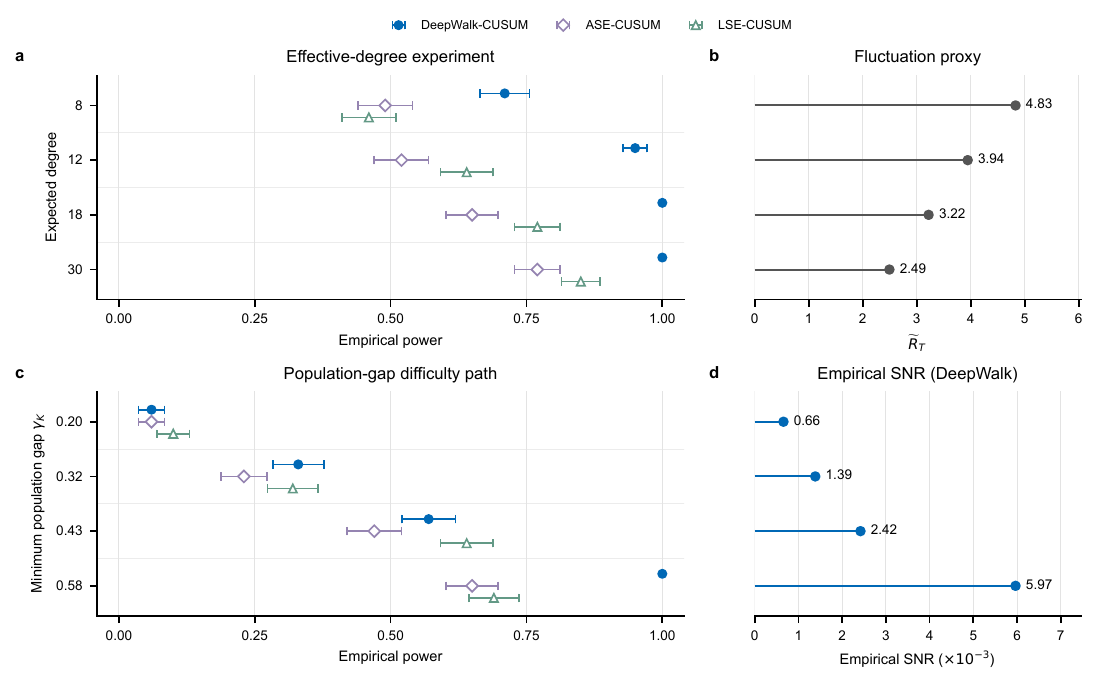}
\caption{Theory-guided detection diagnostics. (a) Empirical power at each expected-degree setting; (b) the corresponding fluctuation proxy $\widetilde R_T$ in~\eqref{eq:theory-diagnostic-proxy}; (c) empirical power at each setting of the constant-density population-gap difficulty path; and (d) the empirical signal-to-fluctuation ratio $\sqrt{k^*(T-k^*)/T}\,\widehat\Delta/\widetilde R_T$ for DeepWalk, displayed in units of $10^{-3}$. Points and horizontal error bars in (a) and (c) report empirical power plus or minus one Monte Carlo standard error over $R=100$ alternatives. Methods are offset vertically within each setting for readability. Zero-based line segments and endpoints in (b) and (d) display the diagnostic values.}
\label{fig:theory-guided}
\end{figure*}

Figure~\ref{fig:theory-guided}(a)--(b) shows the predicted benefit of greater effective connectivity. As expected degree increases from 8 to 30, $\widetilde R_T$ decreases from 4.83 to 2.49 and DeepWalk power rises from 0.71 to 1.00; ASE and LSE also improve. Along the gap path, the minimum population gap increases from 0.203 to 0.580, and DeepWalk power rises from 0.06 to 1.00. The accompanying empirical signal-to-fluctuation ratio also increases. These controlled trends support the signal--noise mechanism of the theory, while the joint change in gap and structural contrast limits a one-factor interpretation of the second diagnostic.

\subsection{Multiple Change-Point Detection}
For the multiple-change experiment, we take $T=300$ with true change set $\mathcal K^*=\{100,200\}$ and three density-matched regimes with three, five, and four communities. The number of changes is not supplied to the methods. Each method calibrates its stopping threshold as the higher empirical 95\% quantile of its maximum scan statistic over $B_{\rm cal}=499$ independent no-change sequences. For vector representations, binary segmentation recursively splits a segment only when its local CUSUM statistic exceeds this threshold, with minimum segment length 30. LAD and SCPD use their native scores with retrospective reference windows of 5 and 10 snapshots; threshold-exceeding local peaks are retained with minimum separation 30. Evaluation uses 100 additional no-change and 100 alternative sequences, with paired graph-sequence seeds across methods. For estimated set $\widehat{\mathcal K}$, accuracy is summarized by the Hausdorff distance
\begin{equation}
d_H(\widehat{\mathcal K},\mathcal K^*)=
\max\left\{
\max_{a\in\widehat{\mathcal K}}\min_{b\in\mathcal K^*}|a-b|,
\max_{b\in\mathcal K^*}\min_{a\in\widehat{\mathcal K}}|a-b|
\right\}.
\end{equation}
When $\widehat{\mathcal K}$ is empty under the alternative, we set $d_H=T$ for reporting.

\begin{figure*}[t]
\centering
\includegraphics[width=0.94\textwidth]{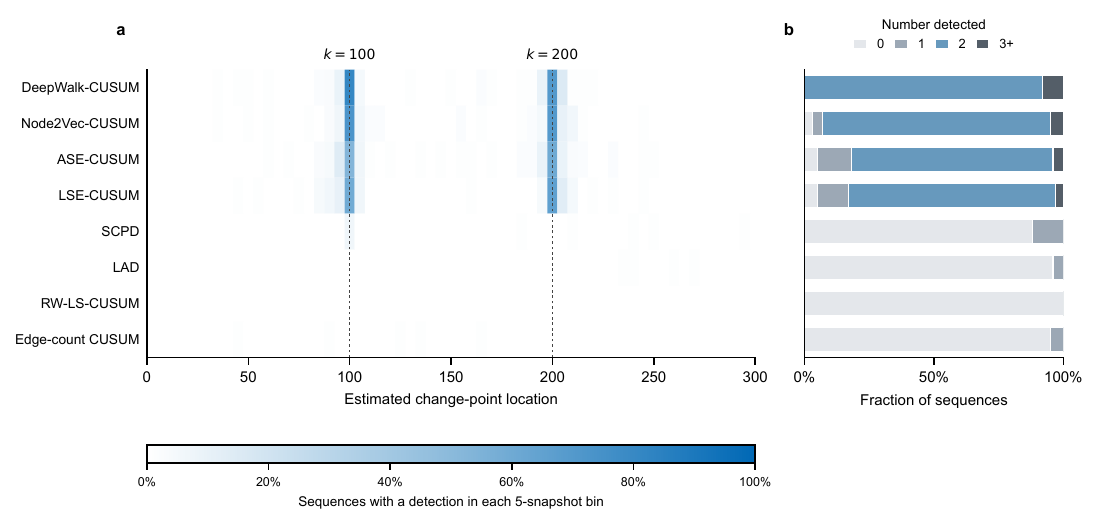}
\caption{Multiple-change detection with calibrated automatic stopping. (a) Fraction of $R=100$ alternative graph-sequence replications reporting at least one change in each five-snapshot bin. All methods share the same color scale, and the denominator includes replications with no detected change. Dashed vertical lines mark the true changes at 100 and 200. (b) Distribution of the number of estimated changes, grouped as 0, 1, 2, and 3 or more.}
\label{fig:multichange-autostop}
\end{figure*}

\begin{table*}[t]
\centering
\caption{Multiple-change detection with threshold-calibrated automatic stopping. The true change points are 100 and 200 in a length-300 sequence. FPR is the probability of reporting at least one change under no change; power is the probability of reporting at least one change under the alternative. Results use 499 calibration, 100 no-change, and 100 alternative replications. Best values are bold; for FPR and mean $\widehat m$, best means closest to 0.05 and 2, respectively. Ties are retained.}
\label{tab:multiple-change-comparison}
\begingroup
\footnotesize
\setlength{\tabcolsep}{3.5pt}
\begin{tabular}{lccccccc}
\toprule
Method & FPR & Power & Exact count & All within 5 & Mean $d_H$ & Median $d_H$ & Mean $\widehat m$ \\
\midrule
DeepWalk--CUSUM & 0.01 & \textbf{1.00} & \textbf{0.92} & \textbf{0.77} & \textbf{6.41} & \textbf{2.0} & 2.08 \\
Node2Vec--CUSUM & \textbf{0.05} & 0.97 & 0.88 & 0.69 & 18.74 & \textbf{2.0} & \textbf{1.96} \\
ASE--CUSUM & 0.03 & 0.95 & 0.78 & 0.55 & 33.61 & 4.0 & 1.81 \\
LSE--CUSUM & 0.00 & 0.95 & 0.80 & 0.59 & 31.97 & 4.0 & 1.81 \\
SCPD & 0.04 & 0.12 & 0.00 & 0.00 & 277.85 & 300.0 & 0.12 \\
LAD & 0.06 & 0.04 & 0.00 & 0.00 & 294.07 & 300.0 & 0.04 \\
RW-LS--CUSUM & 0.04 & 0.00 & 0.00 & 0.00 & 300.00 & 300.0 & 0.00 \\
Edge-count CUSUM & 0.03 & 0.05 & 0.00 & 0.00 & 289.71 & 300.0 & 0.05 \\
\bottomrule
\end{tabular}
\endgroup
\end{table*}

Figure~\ref{fig:multichange-autostop} and Table~\ref{tab:multiple-change-comparison} show that DeepWalk--CUSUM detects at least one change in all alternative sequences, estimates exactly two changes in 92\% of replications, and places both changes within five snapshots in 77\% of replications. Its mean and median Hausdorff errors are 6.41 and 2.0. Node2Vec--CUSUM has power 0.97, exact-count rate 0.88, all-within-5 rate 0.69, and mean and median Hausdorff errors 18.74 and 2.0. ASE--CUSUM and LSE--CUSUM both have power 0.95, with all-within-5 rates 0.55 and 0.59. DeepWalk is conservative under no change, with FPR 0.01, compared with 0.05 for Node2Vec. The FPRs of RW-LS, LAD, and SCPD range from 0.04 to 0.06, while their powers range from 0.00 to 0.12. None recovers both changes in these 100 replications. Edge-count CUSUM likewise has little power in this density-matched construction. These are finite-sample comparisons; the replication-level outcomes and Monte Carlo standard errors are included in the accompanying results.

\subsection{Robustness and Ablation Studies}

An intentionally misspecified node-order perturbation diagnostic is reported in Appendix~\ref{app:node-order-perturbation}. It delineates the failure induced by withholding correspondence information and is not evidence of robustness to unmatched node populations.

We also evaluate whether the fixed-reference Procrustes alignment step is necessary for embedding-based CUSUM statistics. The ablation is run under the main density-matched single-change design with $R=100$ replications and compares graph-level vectors obtained by mean pooling before and after alignment.

\begin{table}[t]
\centering
\caption{Paired alignment ablation under the main density-matched design ($R=100$). The two variants share the same networks and fitted embeddings within each method and replication. Parentheses report Monte Carlo standard errors; the best result within each alignment variant is bold.}
\label{tab:alignment-ablation}
\begingroup
\scriptsize
\setlength{\tabcolsep}{1.2pt}
\begin{tabular}{@{}lcccc@{}}
\toprule
& \multicolumn{2}{c}{Unaligned} & \multicolumn{2}{c}{Procrustes aligned} \\
\cmidrule(lr){2-3}\cmidrule(lr){4-5}
Method & MAE & Within 5 & MAE & Within 5 \\
\midrule
DeepWalk & \textbf{2.44 (0.40)} & \textbf{0.87 (0.034)} & \textbf{1.12 (0.16)} & \textbf{0.97 (0.017)} \\
Node2Vec & 19.56 (1.63) & 0.31 (0.046) & 5.20 (0.90) & 0.77 (0.042) \\
ASE & 33.06 (1.34) & 0.07 (0.026) & 9.97 (1.40) & 0.59 (0.049) \\
LSE & 31.66 (1.51) & 0.08 (0.027) & 7.85 (1.26) & 0.67 (0.047) \\
\bottomrule
\end{tabular}
\endgroup
\end{table}

Table~\ref{tab:alignment-ablation} confirms that coordinate alignment is an important stabilization step for independently fitted embeddings. Procrustes alignment reduces the MAE from 2.44 to 1.12 for DeepWalk--CUSUM and increases its Within-5 rate from 0.87 to 0.97. The gains are larger for the other embedding methods: the MAE decreases from 19.56 to 5.20 for Node2Vec--CUSUM, from 33.06 to 9.97 for ASE--CUSUM, and from 31.66 to 7.85 for LSE--CUSUM. These paired comparisons isolate the effect of coordinate alignment from variation in the simulated networks and fitted embeddings.

Because the theory allows a fixed or independently estimated reference, we also compare three prespecified reference choices for DeepWalk--CUSUM: the first snapshot $X_1$, the medoid of the first ten fitted embeddings under pairwise Procrustes distance, and an embedding from an independently generated null graph. The three variants share the same graph sequence and fitted snapshot embeddings within each replication. Each variant is calibrated separately using 999 null sequences and evaluated on 200 alternatives.

\begin{table}[t]
\centering
\caption{Reference-choice sensitivity for DeepWalk--CUSUM. Localization is conditional on rejection; parentheses report Monte Carlo standard errors. Best values are bold, with ties retained.}
\label{tab:reference-choice}
\begingroup
\scriptsize
\setlength{\tabcolsep}{2.0pt}
\begin{tabular}{@{}lccc@{}}
\toprule
Reference & Power & Cond. MAE & Cond. Within 5 \\
\midrule
First snapshot $X_1$ & \textbf{1.000 (0.000)} & \textbf{0.915 (0.119)} & 0.965 (0.013) \\
Early-window medoid & \textbf{1.000 (0.000)} & 0.950 (0.142) & \textbf{0.970 (0.012)} \\
Independent null & \textbf{1.000 (0.000)} & 0.950 (0.112) & \textbf{0.970 (0.012)} \\
\bottomrule
\end{tabular}
\endgroup
\end{table}

Table~\ref{tab:reference-choice} shows no material sensitivity within this prespecified reference set: all three choices have power 1.00, conditional MAE between 0.915 and 0.950, and conditional Within-5 rates between 0.965 and 0.970. This result supports the fixed-reference implementation used elsewhere, but it does not remove the requirement for known correspondence or informative common anchors.

Finally, we varied the change location over $k^*/T\in\{0.25,0.50,0.75\}$ under the full density-matched alternative. DeepWalk--CUSUM has power at least 0.99 at all three locations, with conditional MAE between 1.09 and 1.52 and conditional Within-5 rates between 0.95 and 0.96. Node2Vec--CUSUM has power between 0.73 and 0.84, while ASE--CUSUM and LSE--CUSUM are less stable near the sequence boundaries. Appendix~\ref{app:change-location} reports the complete comparison and makes explicit that localization is conditional on rejection.

We additionally considered a sparse degree-corrected SBM to assess robustness beyond homogeneous node degrees. LSE--CUSUM attained the lowest MAE, 7.06, in this setting. DeepWalk--CUSUM attained an MAE of 10.53 and a Within-5 rate of 0.56, compared with 10.73 and 0.54 for ASE--CUSUM. Node2Vec--CUSUM was less stable, with MAE 19.31 and Within 5 equal to 0.33. DeepWalk also outperformed RW-LS, LAD, SCPD, and edge-count CUSUM. Appendix~\ref{app:dcsbm-stress} gives the complete design and results. This experiment identifies degree normalization as beneficial under strong degree heterogeneity, while showing that DeepWalk remains competitive with an unnormalized spectral embedding.

We also varied one DeepWalk hyperparameter at a time using paired graph sequences; Appendix~\ref{app:hyperparameter-sensitivity} reports the complete results. Performance was stable for $d\in\{4,8,16\}$, 5--20 walks per node, and walk lengths of 15--60. Their Within-5 rates ranged from 0.92 to 0.97. The context window had a larger effect: window sizes 5, 10, and 15 produced MAEs of 0.23, 1.29, and 3.23. We retained the prespecified window size of 10 in all primary comparisons rather than selecting it after observing this diagnostic.

\subsection{Computational Scalability}

We next examine the tradeoff among localization accuracy, representation dimension, and end-to-end computation as the number of nodes increases. We use the same density-matched design with $n\in\{100,150,200,300\}$ and $R=30$ paired replications at each size. As specified in Section~V-A, balanced community sizes and the exact finite-$n$ density formula are used at every $n$. Runtime includes representation construction and the downstream scan and is measured serially using one Python process and one BLAS thread. These measurements characterize the present implementations and hardware; they are not universal complexity constants.

\begin{figure*}[t]
\centering
\includegraphics[width=0.96\textwidth]{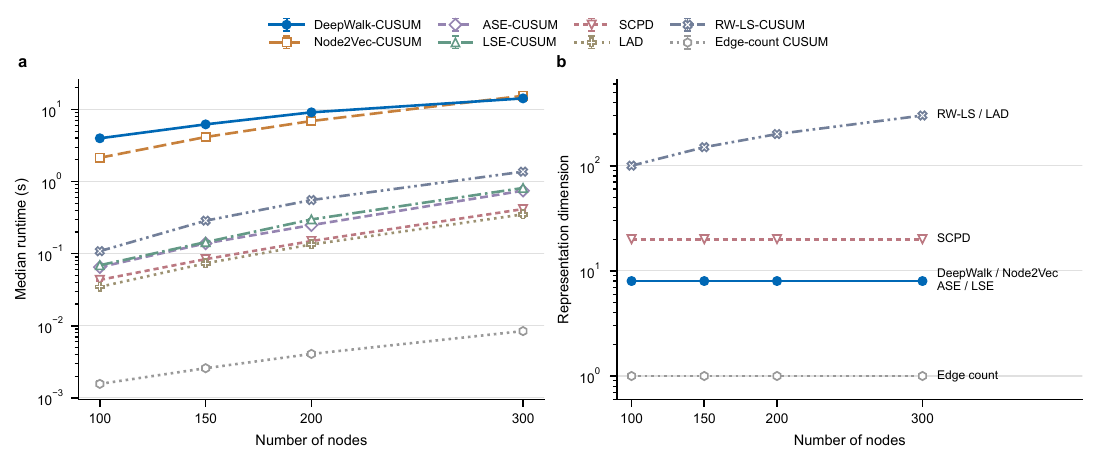}
\caption{Computational scalability and representation dimension. (a) Median serial end-to-end runtime over $R=30$ paired graph sequences, with interquartile error bars. (b) Dimension of each graph representation or spectral descriptor; methods with identical dimensions share a labeled curve. Both vertical axes use logarithmic scales. DeepWalk, Node2Vec, ASE, and LSE use $d=8$ graph-level vectors, SCPD uses 20 bins, RW-LS and LAD use $n$-dimensional descriptors, and edge-count CUSUM has dimension one.}
\label{fig:scalability-dimension}
\end{figure*}

Figure~\ref{fig:scalability-dimension} shows that the median DeepWalk runtime increases from 3.98 seconds at $n=100$ to 14.22 seconds at $n=300$; Node2Vec increases from 2.14 to 15.44 seconds. ASE and LSE are substantially faster in this implementation, requiring 0.75 and 0.81 seconds at $n=300$. The same paired runs show that DeepWalk MAE decreases from 2.97 to 0.07 and Node2Vec MAE decreases from 5.37 to 0.60 as $n$ increases. At $n=300$, ASE and LSE have MAEs of 2.87 and 1.80, respectively. These embedding comparisons indicate a localization-accuracy--runtime tradeoff under this design, not a general speed advantage. Fixed-dimensional summaries keep the downstream CUSUM input compact, but this property is shared by DeepWalk, Node2Vec, ASE, and LSE and does not imply faster representation construction. Appendix~\ref{app:runtime-details} reports medians and interquartile intervals for every method and network size.

The density-matched single-change, calibrated signal, theory-guided, automatic multi-change, node-order perturbation, reference-choice, change-location, scalability, alignment-ablation, sparse degree-corrected SBM, and hyperparameter-sensitivity experiments were reproduced from archived scripts using the fixed-reference alignment in~\eqref{eq:procrustes}. Node2Vec is included in the main single-change, calibrated signal, size/power, multiple-change, change-location, and runtime comparisons. Seeds, software versions, full hyperparameters, and artifact-level provenance are documented in the replication output and Appendix~\ref{app:simulation-reproducibility}.

\section{Real Data Analysis}
\label{sec:real}

\subsection{Data and Network Construction}

International trade systems can naturally be represented as dynamic weighted networks, where countries or regions are connected through trade relationships and trade volumes correspond to edge weights. As global supply chains, international trade policies, and geopolitical environments evolve over time, the underlying network structure may also exhibit substantial temporal variation.

To illustrate the structural features of international trade networks, Figure~\ref{fig:trade_network} (a) provides a visualization of a representative trade network. Nodes denote countries or regions, edges represent trade relationships, and edge weights reflect trade volumes. Different colors indicate modularity classes identified from the network structure. The visualization illustrates the heterogeneous distribution of export relationships and trade intensities across trading partners.

\begin{figure}[!t]
\centering
\includegraphics[width=\columnwidth]{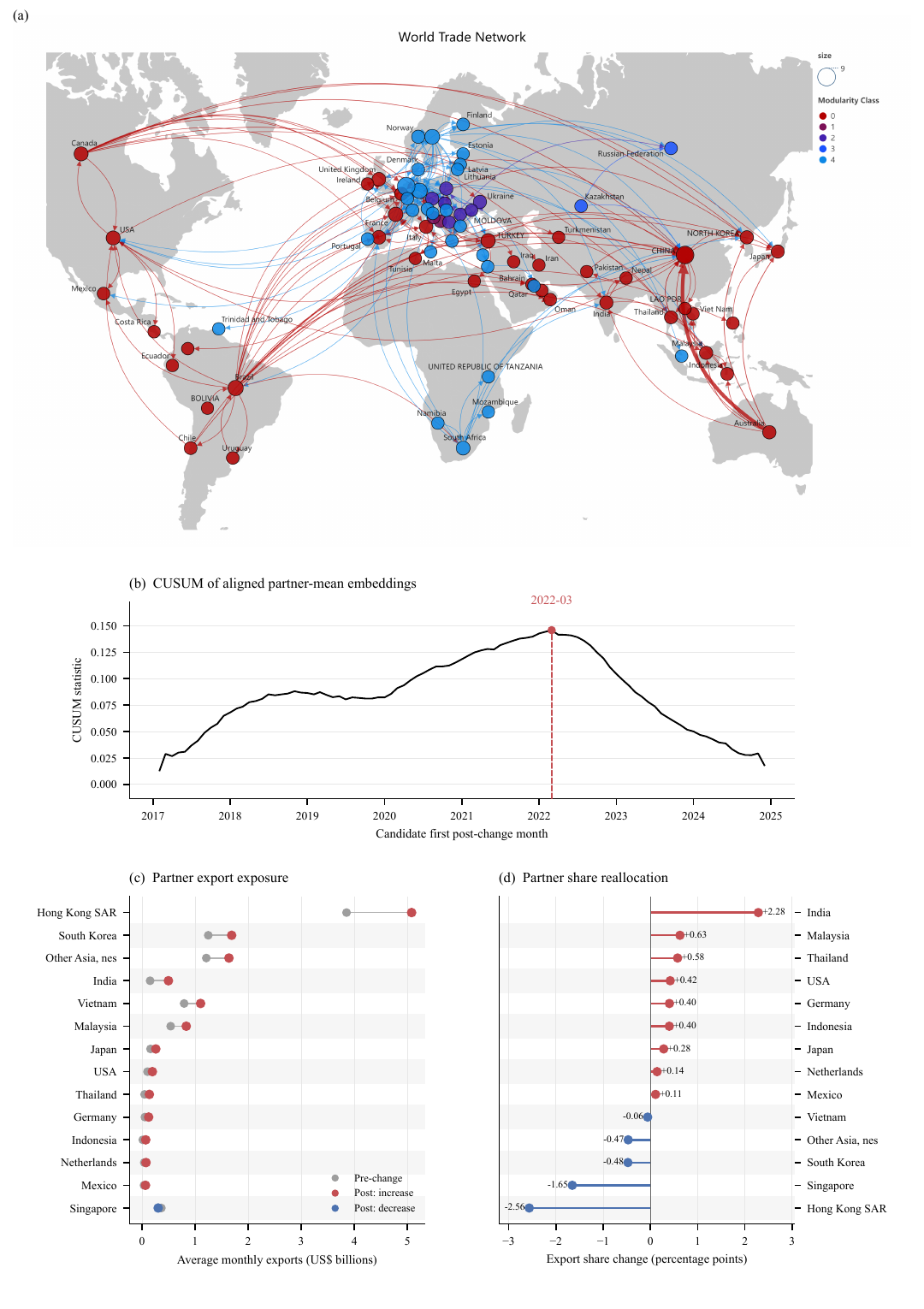}
\caption{China's HS8542 export-network analysis. Panel (a) visualizes the international trade network structure. Panel (b) shows the CUSUM statistic of the aligned monthly graph representations, with the dashed line indicating March 2022 as the first post-change month. Panel (c) compares average monthly exports before and after the detected change point. Panel (d) reports the corresponding changes in export shares in percentage points. Panels (c) and (d) display the 14 trading partners with the largest absolute changes in average monthly exports.}
\label{fig:trade_network}
\end{figure}

Motivated by this dynamic-network perspective, we study structural changes in China's electronic integrated circuit export network using monthly trade data from the United Nations Comtrade Database (UN Comtrade)~\citep{uncomtrade}. We consider China's exports under HS8542 from January 2017 to December 2024. After removing non-specific partner categories and invalid records, the resulting data contain 11,262 positive-export observations involving 204 trading partners over 96 months. Each month is represented as a weighted trade network, where reported export flows define links between China and its trading partners, while unreported partner links are treated as absent.

For each month \(t\), we construct a weighted star centered on China. Its leaves are the active trading partners \(V_t^+\), defined by strictly positive reported export values. Node identities are fixed across the sample using Comtrade partner codes; inactive partners do not participate in that month's embedding. For random-walk generation, we use an undirected adjacency with \(w_{\mathrm{China},i,t}=w_{i,\mathrm{China},t}=x_{i,t}\), where \(x_{i,t}\) is China's HS8542 export value to partner \(i\). The reverse adjacency permits walks to return to China and does not add import values. Thus, the economic relationships are exports from China, while the embedding operator uses their symmetrized adjacency.

We apply weighted DeepWalk independently to each monthly network. For each active node \(i\), random walks are generated according to the weighted transition probability
\(P_t(i\to j)=w_{ij,t}/\sum_k w_{ik,t}\).
We generate 50 random walks per node, each with a length of 10, and use a maximum context window of 5. The resulting walk sequences are then used to learn node embeddings through the Skip-Gram model with negative sampling. We set the embedding dimension to \(d=16\), use five negative samples and five training epochs, and fix the random seed at 42 for reproducibility.

To ensure comparability of independently fitted embeddings across months, we use January 2017 as the fixed reference. Let \(V_t^+\) denote the active trading partners in month \(t\), excluding China, and define the common anchors by
\(\mathcal A_t=V_t^+\cap V_{\rm ref}^+\).
The current and reference anchor matrices are arranged in the same partner-code order. Given

$$
X_{t,\mathcal A_t}^{\top}X_{{\rm ref},\mathcal A_t}
=
U_t\Sigma_tV_t^{\top},
$$

we set \(\widehat Q_t=U_tV_t^{\top}\) and align all active-node embeddings through
\(\widetilde X_t=X_t\widehat Q_t\).
The reference month uses the identity transformation.

Because the set of active trading partners varies across months, we use the active-node analogue of the mean-pooling representation in~\eqref{eq:graph-embedding},

$$
Z_t
=
\frac{1}{|V_t^+|}
\sum_{i\in V_t^+}\widetilde X_{t,i}
\in\mathbb R^{16}.
$$

This produces 96 monthly graph representations in a common coordinate system. The number of common anchors ranges from 84 to 109 across months, substantially exceeding the embedding dimension \(d=16\).

Because the random-walk transition probabilities depend on normalized edge weights, a uniform rescaling of all export values within a snapshot does not change the resulting walk distribution. Accordingly, the detected changes should be interpreted primarily as shifts in export allocation and partner participation rather than changes in aggregate export volume.

\subsection{Empirical Results}

Figure~\ref{fig:trade_network}(b) presents the CUSUM statistic computed from the aligned monthly graph representations. The maximum occurs at \(\widehat{k}=62\), corresponding to a split between February and March 2022. We therefore identify March 2022 as the first month of the estimated post-change regime.

The estimated change location is highly stable across the examined embedding configurations. All combinations of \(d\in\{8,16,32\}\) and random seeds \(\{11,22,33,42,55\}\) identify the same March 2022 transition. The estimated location also remains unchanged under alternative reference months, further supporting the stability of the detected transition.

Panels~\ref{fig:trade_network}(c) and (d) provide a descriptive view of the structural changes around the estimated break. Panel~(c) compares average monthly exports before and after March 2022 for the trading partners exhibiting the largest absolute changes, whereas panel~(d) reports the corresponding changes in export shares. The two panels reveal substantial heterogeneity across trading partners and show that changes in absolute export exposure do not necessarily coincide with changes in relative export importance.

The estimated change point in March 2022 closely follows changes in the international export-control environment in late February 2022. On February 24--25, 2022, the United States and the European Union introduced additional export restrictions involving semiconductors, technology products, and dual-use goods destined for Russia~\citep{bisRussia2022,euRussia2022}. March 2022 is the first full calendar month following these measures. Together with the partner-level changes shown in Fig.~\ref{fig:trade_network}(c)--(d), this temporal alignment is consistent with a broader reorganization of China's HS8542 export allocation during this period, although the analysis is not intended for causal attribution.

Overall, the empirical analysis shows that the proposed aligned DeepWalk--CUSUM framework identifies a stable and interpretable structural transition in China's HS8542 export network.

\section{Conclusion}
\label{sec:conclusion}

This paper proposes a DeepWalk embedding-based method for change-point detection in dynamic networks.
The proposed method transforms graph-valued observations into a sequence of low-dimensional vectors through random-walk embeddings, uses fixed-reference orthogonal Procrustes alignment to make independently fitted embeddings comparable, and uses a multivariate CUSUM statistic for testing and localization.
The theoretical analysis shows that the population co-occurrence operator of DeepWalk and the graph transition operator share the same dominant spectral subspace.
Under the small initialization and spectral gap conditions, the empirical embeddings concentrate around the population spectral structure.
Furthermore, we derive a stochastic fluctuation bound under the null hypothesis, a detection boundary under the alternative hypothesis, and a localization error bound.
These results explicitly characterize the relationships among the number of nodes, network sparsity, spectral gap, embedding dimension, and signal strength.
The calibrated simulations show that detection improves with structural signal and effective connectivity, and the theory-guided diagnostics exhibit the predicted association with the fluctuation scale. The multiple-change, reference-choice, change-location, degree-heterogeneity, and runtime experiments further identify where DeepWalk--CUSUM is accurate and where normalized spectral representations or simpler summaries are computationally preferable. The trade network analysis illustrates the method on a weighted empirical sequence.

This study still has several limitations.
First, the theoretical analysis mainly focuses on independent snapshots, a fixed or independently estimated reference embedding, and changes in the embedding mean.
When the reference is estimated from the analyzed sequence, its induced temporal dependence requires additional analysis. Procrustes alignment also requires known node correspondence or sufficiently informative common anchors.
Temporal dependence, changes in the node set, and more general distributional changes require further investigation.
Second, mean pooling may overlook certain forms of local heterogeneity, and future work may consider learnable permutation-invariant aggregation operators.
Finally, the multiple change-point procedure adopts classical binary segmentation.
More complex settings involving narrow or densely spaced change points may be addressed by extending the method with wild binary segmentation or dynamic programming approaches.

\appendices
\section{Sparse Degree-Corrected SBM Stress Experiment}
\label{app:dcsbm-stress}

The main simulations use homogeneous block-specific connection probabilities. To examine a more heterogeneous and sparse setting, we generated networks from a degree-corrected stochastic block model (DCSBM). We used $n=150$, $T=100$, a single change point at $k^*=50$, and $R=100$ independent graph-sequence replications. The pre-change and post-change networks contained three and five equal-sized communities, respectively. Node identities were fixed over time.

For each replication, node propensities were independently generated from a log-normal distribution with log-scale standard deviation $0.6$. They were truncated to $[0.4,2.0]$ and rescaled to have sample mean one. The same propensity vector $\theta=(\theta_1,\ldots,\theta_n)^\top$ was retained across all snapshots in that replication. Conditional on the community labels $g_{i,t}$ and propensities, the upper-triangular adjacency entries were independently generated according to
\begin{equation}
\Pr(A_{ij,t}=1\mid \theta,g_t)
=\theta_i\theta_j
\begin{cases}
p_{\rm in}^{(r)}, & g_{i,t}=g_{j,t},\\
p_{\rm out}, & g_{i,t}\ne g_{j,t},
\end{cases}
\label{eq:dcsbm-edge-probability}
\end{equation}
where $r=0$ before the change and $r=1$ after the change. We set $p_{\rm in}^{(0)}=0.12$ and $p_{\rm out}=0.02$.

The post-change within-community probability was selected separately in each replication to match the propensity-weighted expected edge density. Specifically, let
\begin{equation}
W_r=\sum_{i<j}\theta_i\theta_j
\mathbf{1}\{g_i^{(r)}=g_j^{(r)}\},
\qquad
W=\sum_{i<j}\theta_i\theta_j.
\end{equation}
Equating the expected numbers of edges before and after the change gives
\begin{equation}
p_{\rm in}^{(1)}
=p_{\rm out}+\frac{W_0}{W_1}
\left(p_{\rm in}^{(0)}-p_{\rm out}\right).
\label{eq:dcsbm-density-matching}
\end{equation}
Across the 100 replications, the mean value of $p_{\rm in}^{(1)}$ was 0.1889, and the mean coefficient of variation of the node propensities was 0.508. The realized average degrees before and after the change were 7.872 and 7.870. The corresponding fractions of isolated nodes were 0.0073 and 0.0070. Thus, the experiment introduced substantial degree heterogeneity and sparsity without adding a first-order density signal.

DeepWalk used $d=8$, ten walks of length 30 per node, a context-window size of 10, and fixed-reference Procrustes alignment. Every method received the same graph sequence within each replication. Table~\ref{tab:dcsbm-stress} reports localization performance; the values in parentheses are Monte Carlo standard errors.

\begin{table}[H]
\centering
\caption{Localization under the sparse degree-corrected SBM stress experiment. Smaller MAE and median absolute error and larger Within 5 are preferred; the best value in each column is bold. Results are based on $R=100$ paired graph-sequence replications.}
\label{tab:dcsbm-stress}
\begingroup
\scriptsize
\setlength{\tabcolsep}{2.5pt}
\begin{tabular}{lccc}
\toprule
Method & MAE & Median AE & Within 5 \\
\midrule
LSE--CUSUM        &  \textbf{7.06 (1.08)} &  \textbf{2.5} & \textbf{0.65 (0.048)} \\
DeepWalk--CUSUM   & 10.53 (1.37) &  5.0 & 0.56 (0.050) \\
ASE--CUSUM        & 10.73 (1.35) &  5.0 & 0.54 (0.050) \\
RW-LS--CUSUM      & 16.57 (1.81) &  6.5 & 0.47 (0.050) \\
Node2Vec--CUSUM   & 19.31 (1.67) & 18.0 & 0.33 (0.047) \\
SCPD              & 22.52 (1.46) & 22.0 & 0.17 (0.038) \\
LAD               & 22.82 (1.30) & 23.5 & 0.13 (0.034) \\
Edge-count CUSUM  & 33.60 (1.45) & 37.0 & 0.05 (0.022) \\
\bottomrule
\end{tabular}
\endgroup
\end{table}

LSE--CUSUM achieved the lowest MAE and the largest Within-5 rate in this setting. DeepWalk--CUSUM and ASE--CUSUM had similar localization accuracy, with differences smaller than their Monte Carlo standard errors. Node2Vec--CUSUM was less stable than both, with MAE 19.31 and Within 5 equal to 0.33. DeepWalk--CUSUM was also more accurate than RW-LS--CUSUM and the native LAD and SCPD scores. Edge-count CUSUM remained inaccurate because the expected edge density was matched across the change. The improvement of LSE--CUSUM over the unnormalized embedding methods is consistent with the explicit degree normalization in the Laplacian representation.

\section{DeepWalk Hyperparameter Sensitivity}
\label{app:hyperparameter-sensitivity}

We evaluated DeepWalk--CUSUM under the main density-matched single-change design with $n=150$, $T=100$, $k^*=50$, and $R=100$ replications. The default configuration used embedding dimension $d=8$, ten walks per node, walk length 30, and context-window size 10. We varied one component at a time over
\begin{equation}
\begin{split}
d&\in\{4,8,16\},\qquad M\in\{5,10,20\},\\
L&\in\{15,30,60\},\qquad w\in\{5,10,15\},
\end{split}
\label{eq:hyperparameter-grid}
\end{equation}
where $M$, $L$, and $w$ denote the number of walks per node, walk length, and context-window size. All configurations within a replication received the same graph sequence and independent random-walk seeds. The default row was recomputed within this paired experiment and therefore uses a different Monte Carlo sample from Table~\ref{tab:density-matched-single}.

\begin{table}[H]
\centering
\caption{One-factor-at-a-time sensitivity of DeepWalk--CUSUM. The default is $(d,M,L,w)=(8,10,30,10)$. Parentheses report Monte Carlo standard errors over $R=100$ paired graph-sequence replications. The smallest errors and largest Within 5 are bold; ties are retained.}
\label{tab:hyperparameter-sensitivity}
\begingroup
\scriptsize
\setlength{\tabcolsep}{2.2pt}
\begin{tabular}{lcccc}
\toprule
Setting & Value & MAE & Median AE & Within 5 \\
\midrule
Default          & -- & 1.29 (0.20) & \textbf{0.0} & 0.95 (0.022) \\
Dimension $d$    &  4 & 1.43 (0.22) & 0.5 & 0.95 (0.022) \\
Dimension $d$    & 16 & 0.97 (0.18) & \textbf{0.0} & 0.97 (0.017) \\
Walk count $M$   &  5 & 1.43 (0.24) & 1.0 & 0.93 (0.026) \\
Walk count $M$   & 20 & 0.81 (0.16) & \textbf{0.0} & 0.97 (0.017) \\
Walk length $L$  & 15 & 0.83 (0.15) & \textbf{0.0} & 0.97 (0.017) \\
Walk length $L$  & 60 & 1.32 (0.23) & \textbf{0.0} & 0.92 (0.027) \\
Window size $w$  &  5 & \textbf{0.23 (0.06)} & \textbf{0.0} & \textbf{1.00 (0.000)} \\
Window size $w$  & 15 & 3.23 (0.61) & 1.0 & 0.83 (0.038) \\
\bottomrule
\end{tabular}
\endgroup
\end{table}

Table~\ref{tab:hyperparameter-sensitivity} shows limited sensitivity to the embedding dimension. Increasing the number of walks from 5 to 20 modestly improved localization, whereas longer walks did not improve accuracy in this design. The context-window size produced the clearest difference. A window of 5 attained perfect Within-5 localization, while a window of 15 reduced that rate to 0.83. These results support the stability of the default configuration across most components but also identify the context window as a parameter that should be reported and assessed in new applications.

\section{Node-Order Perturbation Diagnostic}
\label{app:node-order-perturbation}

As a boundary-of-applicability diagnostic, we independently permuted the rows and columns at each time point and deliberately withheld the permutation maps from every method. This construction is intentionally misspecified relative to the fixed-correspondence model used by DeepWalk--CUSUM. With known identities, all methods could first restore a common ordering; with genuinely unknown identities, the proposed Procrustes step has no reliable anchors. The experiment therefore measures sensitivity to unavailable correspondence information, not robustness to node relabeling.

\begin{table*}[!t]
\centering
\caption{Node-order perturbation diagnostic with deliberately unavailable correspondence. In the single-change setting, larger power and Within 5 and smaller localization errors are preferred. In the pure perturbation null, smaller FPR indicates less sensitivity to arbitrary matrix ordering. The diagnostic uses its original split-sample 95\% critical values calibrated from 100 independent no-change sequences and is not directly compared with the formal calibration in Table~\ref{tab:size-power}. Dashes indicate that localization metrics are not applicable because no true change point exists under the null. Results are based on $R=100$ replications.}
\label{tab:relabeling-robustness}
\begingroup
\footnotesize
\setlength{\tabcolsep}{5pt}
\begin{tabular}{llcccc}
\toprule
Setting & Method & Power/FPR & MAE & Median AE & Within 5 \\
\midrule
\multirow{4}{*}{Random relabeling with one change}
& DeepWalk--CUSUM & \textbf{1.00} & \textbf{2.81} & \textbf{1.0} & \textbf{0.84} \\
& SCPD & 0.10 & 14.43 & 10.0 & 0.43 \\
& LAD & 0.02 & 20.67 & 19.5 & 0.21 \\
& RW-LS--CUSUM & 0.02 & 38.87 & 44.0 & 0.01 \\
\cmidrule(lr){1-6}
\multirow{4}{*}{Pure relabeling null}
& DeepWalk--CUSUM & 0.08 & -- & -- & -- \\
& SCPD & \textbf{0.00} & -- & -- & -- \\
& LAD & \textbf{0.00} & -- & -- & -- \\
& RW-LS--CUSUM & 0.12 & -- & -- & -- \\
\bottomrule
\end{tabular}
\endgroup
\end{table*}

DeepWalk--CUSUM has power 1.00, MAE 2.81, and FPR 0.08 in this misspecified diagnostic, whereas the permutation-invariant LAD and SCPD scores do not reject under the pure perturbation null but have low power under the structural alternative. The result is retained only to make the correspondence requirement explicit.

\section{Additional Change-Location Results}
\label{app:change-location}

The main experiments place the change at the sequence midpoint. To examine the interior-location condition in Theorem~\ref{thm:detection_boundary}, we repeated the full density-matched alternative at $k^*/T\in\{0.25,0.50,0.75\}$. Each method uses the critical value obtained from its 999 independent null calibration sequences, and each location uses $R=200$ independent alternatives. Power is the rejection probability. MAE and Within 5 are computed only among rejected sequences, so a method with low power can have an apparently moderate conditional localization value based on very few detections.

\begin{table*}[!t]
\centering
\caption{Change-location sensitivity. Each cell reports power / conditional MAE / conditional Within 5 over $R=200$ alternatives. Bold identifies the best result for each location. A dash indicates that conditional localization is undefined because the method never rejected.}
\label{tab:change-location}
\begingroup
\footnotesize
\setlength{\tabcolsep}{4.0pt}
\begin{tabular}{lccc}
\toprule
Method & $k^*/T=0.25$ & $k^*/T=0.50$ & $k^*/T=0.75$ \\
\midrule
DeepWalk--CUSUM & \textbf{0.99/1.24/0.96} & \textbf{0.99/1.09/0.96} & \textbf{0.99/1.52/0.95} \\
Node2Vec--CUSUM & 0.73/3.84/0.84 & 0.84/2.65/0.88 & 0.74/2.34/0.86 \\
ASE--CUSUM & 0.57/4.35/0.83 & 0.72/3.14/0.85 & 0.59/4.70/0.79 \\
LSE--CUSUM & 0.64/2.49/0.90 & 0.75/2.87/0.81 & 0.65/4.42/0.83 \\
SCPD & 0.14/8.75/0.79 & 0.16/10.44/0.62 & 0.15/9.07/0.83 \\
LAD & 0.02/37.00/0.25 & 0.02/27.25/0.00 & 0.04/39.86/0.00 \\
RW-LS--CUSUM & 0.01/21.00/0.00 & 0.00/--/-- & 0.01/55.00/0.00 \\
Edge-count CUSUM & 0.02/23.75/0.00 & 0.03/25.00/0.17 & 0.03/49.33/0.00 \\
\bottomrule
\end{tabular}
\endgroup
\end{table*}

DeepWalk--CUSUM changes little across the three tested locations despite the smaller CUSUM multiplier away from the midpoint. The spectral embedding comparators lose more power near one or both boundaries, while the native score and density-only methods rarely reject. These conclusions are limited to changes that remain well inside the sequence; behavior closer to the endpoints is not assessed.

\section{Runtime Details}
\label{app:runtime-details}

Table~\ref{tab:runtime-scalability} gives the numerical values underlying Fig.~\ref{fig:scalability-dimension}. Each timing is the serial wall-clock time for representation construction and the downstream scan on one length-100 sequence. Interquartile intervals summarize run-to-run variation across 30 paired sequences. The measurements are implementation- and hardware-specific and should be interpreted together with the representation dimensions in Fig.~\ref{fig:scalability-dimension}(b).

\begin{table*}[t]
\centering
\caption{Serial end-to-end runtime in seconds for a length-100 graph sequence. Entries are medians with interquartile intervals in parentheses over $R=30$ paired sequences. The smallest median at each network size is bold. Subsecond timings retain additional decimal places to avoid rounding nonzero runtimes to zero.}
\label{tab:runtime-scalability}
\begingroup
\footnotesize
\setlength{\tabcolsep}{4pt}
\begin{tabular}{@{}lcccc@{}}
\toprule
Method & $n=100$ & $n=150$ & $n=200$ & $n=300$ \\
\midrule
DeepWalk--CUSUM & 3.98 (3.93, 4.02) & 6.21 (6.18, 6.28) & 9.11 (9.07, 9.15) & 14.22 (13.82, 14.53) \\
Node2Vec--CUSUM & 2.14 (2.11, 2.15) & 4.14 (4.11, 4.16) & 6.90 (6.85, 6.97) & 15.44 (15.22, 15.98) \\
ASE--CUSUM & 0.065 (0.064, 0.067) & 0.139 (0.137, 0.142) & 0.249 (0.246, 0.259) & 0.747 (0.722, 0.767) \\
LSE--CUSUM & 0.069 (0.068, 0.071) & 0.144 (0.142, 0.146) & 0.298 (0.263, 0.308) & 0.814 (0.784, 0.838) \\
SCPD & 0.043 (0.042, 0.044) & 0.084 (0.082, 0.085) & 0.149 (0.142, 0.164) & 0.415 (0.380, 0.443) \\
LAD & 0.035 (0.034, 0.035) & 0.074 (0.072, 0.075) & 0.134 (0.131, 0.141) & 0.350 (0.322, 0.401) \\
RW-LS--CUSUM & 0.108 (0.106, 0.110) & 0.286 (0.282, 0.292) & 0.554 (0.541, 0.566) & 1.37 (1.33, 1.40) \\
Edge-count CUSUM & \textbf{0.0016 (0.0015, 0.0017)} & \textbf{0.0026 (0.0024, 0.0028)} & \textbf{0.0041 (0.0039, 0.0042)} & \textbf{0.0084 (0.0076, 0.0088)} \\
\bottomrule
\end{tabular}
\endgroup
\end{table*}

\section{Simulation Reproducibility}
\label{app:simulation-reproducibility}

The accompanying replication package contains the simulation scripts, fixed base seeds and deterministic method-specific seed rules, row-level Monte Carlo outputs, machine-readable run metadata, and the code used to generate every simulation table and figure. Calibration and evaluation sequences are stored as distinct phases, and paired comparisons reuse graph-sequence seeds while assigning separate random-number streams to stochastic embedding methods. A consolidated manifest maps each reported simulation artifact to its generating script, output directory, protocol, and validation checks. Runtime measurements additionally record the one-process, one-BLAS-thread execution rule used for Fig.~\ref{fig:scalability-dimension} and Table~\ref{tab:runtime-scalability}.

\section{Proofs of the Main Results}
\label{app:proofs}

\subsection{Proof of Spectral structure of the DeepWalk operator}
\label{app:proof-1}

\begin{proof}
Since
\[
\widetilde P
=
U\Lambda U^\top,
\]
we have
\[
\widetilde P^t
=
U\Lambda^tU^\top
\]
for every integer \(t\ge1\).

Therefore,
\[
\mathcal M
=
2\sum_{t=1}^{T_w}(L-t)
D_A^{1/2}
U\Lambda^tU^\top
D_A^{-1/2}\Pi.
\]

Observe that
\[
\Lambda^t
=
\mathrm{diag}(\lambda_1^t,\dots,\lambda_n^t).
\]
Hence, \(\mathcal M\) is obtained by applying a polynomial spectral transformation to the
transition operator \(P\).

Since polynomial transformations preserve eigenspaces, the eigenvectors of
\(\mathcal M\) coincide with those of \(P\). In particular, the leading spectral subspace
\[
\mathcal U_K
=
\mathrm{span}(u_1,\dots,u_K)
\]
is invariant under \(\mathcal M\).

This completes the proof.
\end{proof}

\subsection{Proof of Linearized DeepWalk dynamics}
\label{app:proof-2}

\begin{proof}
The proof proceeds by linearizing the gradient of the DeepWalk objective around the
small initialization regime.

Recall that
\[
\mathcal L(X,Y)
=
-
\sum_{i,j}
C_{ij}
\log(Q_{ij}),
\]
with
\[
Q_{ij}
=
\frac{
e^{\langle X_i,Y_j\rangle}
}{
\sum_{k=1}^n e^{\langle X_i,Y_k\rangle}
}.
\]

Define
\[
z_{ij}
=
\langle X_i,Y_j\rangle.
\]
Under Condition (A2) ,
\[
|z_{ij}|
=
o(1).
\]

Using the first-order Taylor expansion of the exponential function,
\[
e^{z_{ij}}
=
1+z_{ij}+O(z_{ij}^2).
\]

Therefore,
\[
\sum_{k=1}^n e^{z_{ik}}
=
n+\sum_{k=1}^n z_{ik}
+
O\left(
\sum_{k=1}^n z_{ik}^2
\right).
\]

Applying the expansion
\[
\frac1{n+a}
=
\frac1n
-
\frac{a}{n^2}
+
O(a^2),
\]
we obtain
\[
Q_{ij}
=
\frac1n
+
\frac1n
\left(
z_{ij}
-
\frac1n\sum_{k=1}^n z_{ik}
\right)
+
O(\|W\|_F^2).
\]

Substituting this expansion into the gradient of the objective function yields
\[
\nabla_W\mathcal L(W)
=
\mathcal L_{\mathcal M}W
+
O(\|W\|_F^2),
\]
where
\[
\mathcal L_{\mathcal M}
\]
is a linear operator depending on the co-occurrence structure encoded in
\[
\mathcal M.
\]

Consequently, the gradient descent iteration becomes
\[
W^{(s+1)}
=
W^{(s)}
-
\eta
\mathcal L_{\mathcal M}W^{(s)}
+
R^{(s)},
\]
where
\[
\|R^{(s)}\|_F
=
O(\|W^{(s)}\|_F^2).
\]

Absorbing the linear term into the operator
\[
\mathcal L_{\mathcal M},
\]
we obtain
\[
W^{(s+1)}
=
\mathcal L_{\mathcal M}W^{(s)}
+
R^{(s)}.
\]

This completes the proof.
\end{proof}

\subsection{Proof of Embedding concentration}
\label{app:proof-3}

\begin{proof}
Since the empirical co-occurrence matrix satisfies
\[
C=\mathcal M+E,
\]
the empirical DeepWalk operator can be viewed as a perturbation of the population
operator.

By Proposition~1, the leading eigenspace of
\[
\mathcal M
\]
coincides with the leading eigenspace of the graph transition operator.

Applying the Davis--Kahan sin-theta theorem yields
\[
\|\sin\Theta(\widehat U_K,U_K)\|
\le
\frac{
\|E\|
}{
\gamma_K
}.
\]

Since
\[
\|E\|
=
O_P\left(
\sqrt{
\frac{\log n}{n\rho}
}
\right),
\]
we conclude that
\[
\|\sin\Theta(\widehat U_K,U_K)\|
=
O_P\left(
\frac1{\gamma_K}
\sqrt{
\frac{\log n}{n\rho}
}
\right).
\]

This completes the proof.
\end{proof}

\subsection{Proof of Null fluctuation bound for DeepWalk-CUSUM}
\label{app:proof-4}

\begin{proof}
Under the null hypothesis,
\[
Z_t=\mu+\varepsilon_t.
\]
Substituting this decomposition into the CUSUM statistic yields
\[
S_T(k)
=
\sqrt{
\frac{k(T-k)}{T}
}
\left(
\frac1k
\sum_{t=1}^k \varepsilon_t
-
\frac1{T-k}
\sum_{t=k+1}^T \varepsilon_t
\right).
\]

Define the partial sum process
\[
R_k
=
\sum_{t=1}^k \varepsilon_t.
\]

Since
\[
\sum_{t=k+1}^T\varepsilon_t
=
R_T-R_k,
\]
we obtain
\[
S_T(k)
=
\sqrt{
\frac{T}{k(T-k)}
}
\left(
R_k-\frac{k}{T}R_T
\right).
\]

Fix
\[
u\in\mathbb S^{d-1}.
\]
Then
\[
\langle u,S_T(k)\rangle
=
\sqrt{
\frac{T}{k(T-k)}
}
\left(
\langle u,R_k\rangle
-
\frac{k}{T}\langle u,R_T\rangle
\right).
\]

Observe that
\[
\langle u,S_T(k)\rangle
\]
is a linear combination of independent sub-Gaussian random variables. Hence,
\[
\langle u,S_T(k)\rangle
\]
is itself sub-Gaussian with variance proxy proportional to
\[
\sigma_\varepsilon^2.
\]

More precisely, there exists a universal constant \(C>0\) such that
\[
\mathbb P
\left(
|\langle u,S_T(k)\rangle|>x
\right)
\le
2\exp\left(
-\frac{x^2}{C\sigma_\varepsilon^2}
\right).
\]

To control the Euclidean norm, let
\[
\mathcal N
\]
be a \(1/2\)-net of the unit sphere
\[
\mathbb S^{d-1}
\]
satisfying
\[
|\mathcal N|
\le
5^d.
\]

Using the standard covering argument,
\[
\|S_T(k)\|
\le
2
\max_{u\in\mathcal N}
|\langle u,S_T(k)\rangle|.
\]

Therefore,
\[
\mathbb P
\left(
\|S_T(k)\|>2x
\right)
\le
2\cdot 5^d
\exp\left(
-\frac{x^2}{C\sigma_\varepsilon^2}
\right).
\]

Applying the union bound over
\[
k=1,\dots,T-1,
\]
we obtain
\[
\mathbb P
\left(
W_T>2x
\right)
\le
2T5^d
\exp\left(
-\frac{x^2}{C\sigma_\varepsilon^2}
\right).
\]

Choosing
\[
x
=
C_1
\sigma_\varepsilon
\sqrt{
d+\log T
},
\]
for a sufficiently large constant
\[
C_1>0,
\]
yields
\[
W_T
=
O_P\left(
\sigma_\varepsilon
\sqrt{
d+\log T
}
\right).
\]

Finally, by Theorem~3,
\[
\sigma_\varepsilon
=
O\left(
\frac1{\gamma_K}
\sqrt{
\frac{\log n}{n\rho}
}
\right).
\]

Substituting this bound gives
\[
W_T
=
O_P\left(
\frac1{\gamma_K}
\sqrt{
\frac{\log n}{n\rho}
}
\sqrt{
d+\log T
}
\right).
\]

When
\[
d
\]
is fixed,
\[
\sqrt{
d+\log T
}
\asymp
\sqrt{\log T},
\]
which implies
\[
W_T
=
O_P\left(
\frac1{\gamma_K}
\sqrt{
\frac{\log n\log T}{n\rho}
}
\right).
\]

This completes the proof.
\end{proof}

\subsection{Proof of Signal structure of the DeepWalk-CUSUM process}
\label{app:proof-5}

\begin{proof}
We first evaluate the deterministic CUSUM process at the true change point.

Since
\[
\mu_t=\mu^{(1)},
\qquad
1\le t\le k^*,
\]
and
\[
\mu_t=\mu^{(2)},
\qquad
k^*<t\le T,
\]
we obtain
\[
\frac1{k^*}
\sum_{t=1}^{k^*}\mu_t
=
\mu^{(1)},
\]
and
\[
\frac1{T-k^*}
\sum_{t=k^*+1}^T\mu_t
=
\mu^{(2)}.
\]

Therefore,
\[
S_T^{(\mu)}(k^*)
=
\sqrt{
\frac{k^*(T-k^*)}{T}
}
\left(
\mu^{(1)}-\mu^{(2)}
\right).
\]

Taking norms yields
\[
\|S_T^{(\mu)}(k^*)\|
=
\sqrt{
\frac{k^*(T-k^*)}{T}
}
\Delta.
\]

Next, for any
\[
k\neq k^*,
\]
the averaging windows mix observations from different regimes, which reduces the signal
magnitude relative to the true change point. A direct calculation shows that the signal
curve is piecewise linear in
\[
|k-k^*|.
\]

Under Condition (A4),
\[
\frac{k^*}{T}
\]
remains bounded away from the boundary, implying that the local slope of the signal curve
is uniformly bounded below by a positive constant depending only on
\[
\tau_0.
\]

Consequently, there exists
\[
c_0>0
\]
such that
\[
\|S_T^{(\mu)}(k^*)\|
-
\|S_T^{(\mu)}(k)\|
\ge
c_0
\Delta
\frac{|k-k^*|}{\sqrt T}.
\]

This completes the proof.
\end{proof}

\subsection{Proof of Detection boundary for DeepWalk-CUSUM}
\label{app:proof-6}

\begin{proof}
Since
\[
W_T
=
\max_{1\le k<T}
\|S_T(k)\|,
\]
we have
\[
W_T
\ge
\|S_T(k^*)\|.
\]

Using the decomposition
\[
S_T(k^*)
=
S_T^{(\mu)}(k^*)
+
S_T^{(\varepsilon)}(k^*),
\]
the reverse triangle inequality yields
\[
\|S_T(k^*)\|
\ge
\|S_T^{(\mu)}(k^*)\|
-
\|S_T^{(\varepsilon)}(k^*)\|.
\]

By Lemma~\ref{lem:signal_structure},
\[
\|S_T^{(\mu)}(k^*)\|
=
\sqrt{
\frac{k^*(T-k^*)}{T}
}
\Delta.
\]

Moreover, by Theorem~\ref{thm:null_fluctuation},
\[
\max_{1\le k<T}
\|S_T^{(\varepsilon)}(k)\|
=
O_P\left(
\frac1{\gamma_K}
\sqrt{
\frac{\log n}{n\rho}
}
\sqrt{
d+\log T
}
\right).
\]

Hence,
\[
\|S_T^{(\varepsilon)}(k^*)\|
=
O_P\left(
\frac1{\gamma_K}
\sqrt{
\frac{\log n}{n\rho}
}
\sqrt{
d+\log T
}
\right).
\]

Combining the above inequalities yields
\[
W_T
\ge
\sqrt{
\frac{k^*(T-k^*)}{T}
}
\Delta
-
O_P\left(
\frac1{\gamma_K}
\sqrt{
\frac{\log n}{n\rho}
}
\sqrt{
d+\log T
}
\right).
\]

Finally, if
\[
\sqrt{
\frac{k^*(T-k^*)}{T}
}
\Delta
\gg
\frac1{\gamma_K}
\sqrt{
\frac{\log n}{n\rho}
}
\sqrt{
d+\log T
},
\]
the deterministic signal dominates the stochastic fluctuation term, implying
\[
W_T
\overset P\longrightarrow
\infty.
\]

This completes the proof.
\end{proof}

\subsection{Proof of Localization consistency}
\label{app:proof-7}

\begin{proof}
Since
\[
\widehat k
=
\arg\max_{1\le k<T}
\|S_T(k)\|,
\]
we have
\[
\|S_T(\widehat k)\|
\ge
\|S_T(k^*)\|.
\]

Using the decomposition
\[
S_T(k)
=
S_T^{(\mu)}(k)
+
S_T^{(\varepsilon)}(k),
\]
together with the triangle inequality and reverse triangle inequality, yields
\[
\|S_T^{(\mu)}(k^*)\|
-
\|S_T^{(\mu)}(\widehat k)\|
\le
2
\max_{1\le k<T}
\|S_T^{(\varepsilon)}(k)\|.
\]

By Lemma~1,
\[
\|S_T^{(\mu)}(k^*)\|
-
\|S_T^{(\mu)}(\widehat k)\|
\ge
c_0
\Delta
\frac{
|\widehat k-k^*|
}{\sqrt T}.
\]

Combining the above inequalities yields
\[
c_0
\Delta
\frac{
|\widehat k-k^*|
}{\sqrt T}
\le
2
\max_{1\le k<T}
\|S_T^{(\varepsilon)}(k)\|.
\]

Applying Theorem~\ref{thm:null_fluctuation} gives
\[
\max_{1\le k<T}
\|S_T^{(\varepsilon)}(k)\|
=
O_P\left(
\frac1{\gamma_K}
\sqrt{
\frac{\log n}{n\rho}
}
\sqrt{
d+\log T
}
\right).
\]

Therefore,
\[
|\widehat k-k^*|
=
O_P\left(
\frac{
\sqrt T
}{
\gamma_K\Delta
}
\sqrt{
\frac{\log n}{n\rho}
}
\sqrt{
d+\log T
}
\right).
\]

Finally, if
\[
\sqrt T\Delta
\gg
\frac1{\gamma_K}
\sqrt{
\frac{\log n}{n\rho}
}
\sqrt{
d+\log T
},
\]
then
\[
\frac{
|\widehat k-k^*|
}{T}
=
o_P(1).
\]

This completes the proof.
\end{proof}

\bibliographystyle{IEEEtranN}
\bibliography{Reference}

\end{document}